\newcommand{\be}{\begin{equation}}
\newcommand{\ee}{\end{equation}}
\newcommand{\beq}{\begin{equation}}
\newcommand{\eeq}{\end{equation}}
\newcommand{\bes}{\begin{eqnarray}}
\newcommand{\ees}{\end{eqnarray}}
\newcommand{\bqa}{\begin{eqnarray}}
\newcommand{\eqa}{\end{eqnarray}}
\newcommand{\bea}{\begin{eqnarray}}
\newcommand{\eea}{\end{eqnarray}}
\newcommand{\tabl} [2] {\begin{array} {#1} #2 \end{array}}
\newcommand{\N}{\mathds{N}}
\newcommand{\R}{\mathds{R}}
\newcommand{\C}{\mathds{C}}
\newcommand{\nn}{\nonumber}
\renewcommand{\N}{\mathbbm{N}}
\renewcommand{\R}{\mathbbm{R}}
\renewcommand{\C}{\mathbbm{C}}
\newcommand{\su}{\mathfrak{su}}
\newcommand{\SB}{\mathrm{SB}(2,\C)}
\newcommand{\ISO}{\mathrm{ISO}}
\newcommand{\cU}{{\mathcal U}}
\def\UQ{{\cU_{q}(\su(2))}}
\newcommand{\f}{\frac}
\newcommand{\cR}{{\mathcal R}}
\newtheorem{lemma}{Lemma}
\newcommand{\cD}{{\cal D}}
\newcommand{\cH}{{\cal H}}
\newcommand{\cO}{{\cal O}}
\newcommand{\la}{\langle}
\newcommand{\ra}{\rangle}
\def\tell{\tilde{\ell}}
\def\tu{\tilde{u}}
\def\one{{\bf 1}}
\def\cop{{\Delta}}
\def\ot{{\,\otimes\,}}
\def\dr{{\rightarrow}}
\def\mone{{^{-1}}}
\newcommand{\cA}{{\cal A}}
\def\veps{{\varepsilon}}
\def\ov{\overline}
\def\hell{\widehat{\ell}}
\def\psir{\psi_{\cR}}
\def\bt{{\bf t}}
\def\demi{{\frac{1}{2}}}
\newcommand{\cL}{{\cal L}}
\newcommand{\act}{{\triangleright}}
\DeclareMathOperator{\tr}{tr}
\DeclareMathOperator{\SU}{SU}
\DeclareMathOperator{\SL}{SL}
\DeclareMathOperator{\U}{U}
\DeclareMathOperator{\Inv}{Inv}
\begin{document}
\title{\Large \bf Towards the Turaev-Viro amplitudes from a Hamiltonian constraint}

\author{{\bf Valentin Bonzom}}\email{bonzom@lipn.univ-paris13.fr}
\affiliation{LIPN, UMR CNRS 7030, Institut Galil\'ee, Universit\'e Paris 13,
99, avenue Jean-Baptiste Cl\'ement, 93430 Villetaneuse, France, EU}

\author{{\bf Ma\"it\'e Dupuis}}\email{m2dupuis@uwaterloo.ca}
\affiliation{Department of Applied Mathematics, University of Waterloo, Waterloo, Ontario, Canada}

\author{{\bf Florian Girelli}}\email{fgirelli@uwaterloo.ca}
\affiliation{Department of Applied Mathematics, University of Waterloo, Waterloo, Ontario, Canada}

\date{\small\today}

\begin{abstract}
\noindent
3D Loop Quantum Gravity with a vanishing cosmological constant can be related to the quantization of the $\SU(2)$ BF theory discretized on a lattice. At the classical level, this discrete model characterizes discrete flat geometries and its phase space is built from $T^\ast \textrm{SU}(2)$. In a recent paper \cite{HyperbolicPhaseSpace}, this discrete model was deformed using the Poisson-Lie group formalism and was shown to  characterize discrete hyperbolic geometries while being still topological. Hence, it is a good candidate to describe the discretization of  $\textrm{SU}(2)$ BF theory with a (negative) cosmological constant. We proceed here to the quantization of this model. At the kinematical level, the Hilbert space is spanned by spin networks built on $\mathcal{U}_{q}(\mathfrak{su}(2))$ (with $q$ real). In particular, the quantization of the discretized Gauss constraint leads naturally to $\mathcal{U}_{q}(\mathfrak{su}(2))$ intertwiners. We also quantize the Hamiltonian constraint on a face of degree 3 and show that physical states are proportional to the quantum 6j-symbol. This suggests that the Turaev-Viro amplitude with $q$ real is a solution of the quantum Hamiltonian. This model is therefore a natural candidate to describe 3D loop quantum gravity with a (negative) cosmological constant.

%

\end{abstract}

\medskip

\keywords{}

\maketitle

\section{Introduction}

3D gravity has been an important testing ground in the aim of quantizing 4D gravity. It has provided quite a lot of ideas that are used today in loop quantum gravity (LQG). One of them is that some degrees of freedom of LQG can be understood as discretized gravity on a lattice, also known as models of discrete geometries. A key question that has yet to be answered is that of the continuum limit of such models and the fate of the diffeomorphism symmetry (see \cite{RestoringDiffBianca, CoarseGrainingSpinNets, CylindricalConsistence, PerfectActions, TimeEvoCoarseGraining} for many discussions on this issue).

Lots of effort have been put into 3D gravity because it is exactly solvable, topological and exact on the lattice such as on a triangulation (see \cite{ScalarProd3D, 3DHamiltonian, Spin1/2Hamiltonian}). The idea is similar to Regge calculus, an approximation to general relativity where the simplices are taken to be flat and curvature lies on the hinges. But in the 3D case with a vanishing cosmological constant, spacetime is flat and the discretization is exact. In particular the set of symmetries of the continuous reduces to a set of symmetries on the lattice which still makes the model topological \cite{CellularQuantization}. The fact that the symmetries survive discretization makes the physical vacuum of the theory quite interesting. It has even been very recently proposed as a new vacuum for LQG, as it is directly physical in the flat (topological) case and is much closer to the spirit of the spin foam quantization (spin foam models are based on modified topological transition amplitudes) \cite{NewVacuumBianca}.

However, in discrete geometry models for non-flat spacetimes, gauge symmetries (diffeomorphisms) are typically broken. This is even the case for 3D gravity with a non-zero cosmological constant and discretized {\it \`a la Regge}, \cite{BrokenSym}. It is nonetheless known that 3D gravity with a cosmological constant $\Lambda\neq0$ is a topological quantum theory (of the Chern-Simons type) \cite{3DGravityAsCS} and its partition function (for positive $\Lambda$) is known to be a quantum invariant, the Turaev-Viro (TV) invariant of 3-manifolds \cite{TV, WalkerTV, RobertsTV}. The TV invariant actually is an example of a spin foam model. It is moreover (formally\footnote{The TV invariant is finite, but it has no $q\to1$ limit in general.}, i.e. term by term in the state-sum model) obtained by taking the $q$-deformation ($q$ a root of unity) of all the spin foam data which come from the representation theory of $\SU(2)$.

All in all, the present situation strongly suggests that there should be a classical model of discrete, homogeneously curved, geometries whose quantization leads to TV transition amplitudes. Since the quantum TV model is based on the $q$-deformation of $\SU(2)$ and that the discrete model underlying LQG for flat geometry has $T^*\SU(2)$ as phase space, it was proposed in \cite{HyperbolicPhaseSpace} to use a classical deformation of $T^*\SU(2)$ as phase space and shown therein that a set of first class constraints generalizing the flat case still make the model topological. Moreover, the geometric content was interpreted as discrete hyperbolic geometries formed by gluing of hyperbolic triangles ($\Lambda<0$). This is a sort of gauge covariant, and canonical, extension of the Regge calculus proposed in \cite{CurvedRegge} that uses homogeneously curved simplices, and it is best suited for a generalization of LQG to $\Lambda<0$. In particular, this framework could be important to improve the discrete hypersurface deformation algebras found in \cite{DiscreteHDA} in the curved case.


\begin{figure}
\includegraphics[scale=.8]{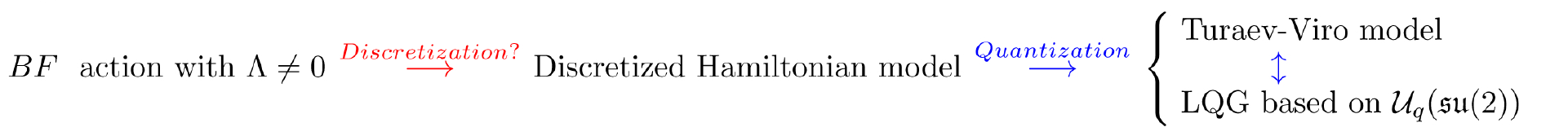}
\caption{The precise discretization scheme to describe the deformed model \cite{HyperbolicPhaseSpace} is still to be determined. We proceed here to the quantization of the  model \cite{HyperbolicPhaseSpace} and recover the Turaev-Viro amplitude as a solution of the Hamiltonian constraint, as well as intertwiners based on $\UQ$. This model is therefore a natural candidate to describe 3D Euclidian LQG with $\Lambda<0$ based on $\UQ$.}\label{plan}
\end{figure}

In the present paper we proceed to the quantization of the model of \cite{HyperbolicPhaseSpace}. We focus on the quantization of the deformed momenta which describe the intrinsic geometry of the embedded surface, following closely \cite{SternYakushin} though with different conventions. At the quantum level, the momenta become generators of $\UQ$. We impose the Gauss law at the quantum level, which is usually the first step in LQG, giving rise to spin network states. It is found that the Gauss law is solved by $q$-intertwiners. We also quantize and solve the Hamiltonian constraint on a face of degree three, following the techniques of \cite{3DHamiltonian} at $q=1$, and find the first hint that the transition amplitude associated to the 3-to-1 Pachner move will be given by the $q$-6j symbol (for real $q$), i.e. the building block of TV amplitudes. This is an important step, although a partial one, to justify the current use of quantum groups in spin foam models for (4D) gravity as a way to account for the cosmological constant. Besides being an important glimpse relating a classical model of discrete geometry to TV amplitudes (see also \cite{KauffmanBracketLQG, pranzetti}), this work also checks the robustness of the physical vacuum of the flat topological model with respect to introducing curvature (still in the context of a topological model -- but it would be hopeless in 4D gravity if it did not work in 3D with $\Lambda\neq0$).

\section{Classical setup} \label{sec:Classical}
We recall the main elements of the model introduced  in \cite{HyperbolicPhaseSpace}.  We consider a cell decomposition of an orientable 2D surface, and equip each edge of the decomposition with the Heisenberg double of $\SU(2)$, $\mathcal{D} = \SL(2,\C)$. It is formed by $\SU(2)$ and its dual, the group   $\SB$ which consists of lower triangular matrices of the form
\be
\ell = \begin{pmatrix} \lambda & 0 \\ z & \lambda^{-1} \end{pmatrix},
\ee
with $z\in \C$ and $\lambda$ a non-vanishing real number. For $u\in\SU(2)$, $\SL(2,\C)$ is found as products $G= \ell u$ corresponding to the left Iwasawa decomposition. The Poisson brackets are
\be \label{pb}
\begin{aligned}
&\{ \ell_1, \ell_2\} = -[r,\ell_1 \ell_2], && \{\ell_1, (\ell_2^\dagger)^{-1} \} = -[r, \ell_1 (\ell_2^\dagger)^{-1}], && \{(\ell_1^\dagger)^{-1}, (\ell_2^\dagger)^{-1} \} = -[r, (\ell_1^\dagger)^{-1} (\ell_2^\dagger)^{-1}] \\
&\{\ell_1, u_2\} = -\ell_1 r u_2, && \{(\ell_1^\dagger)^{-1}, u_2\} = -(\ell_1^\dagger)^{-1} r^\dagger u_2, && \{u_1, u_2\}= [r, u_1 u_2],
\end{aligned}
\ee
with the notations are $a_1 = a \otimes \mathbbm{I}, a_2 = \mathbbm{I} \otimes a$, and where the classical $r$-matrix is
\be
r = \frac{i\kappa}{4} \begin{pmatrix} 1 &0 &0 &0\\ 0 &-1 &0 &0 \\ 0 &4 &-1 &0 \\ 0 &0 &0 &1 \end{pmatrix}
\ee
An equivalent set of variables is found via the right Iwasawa decomposition, $G= \tu\tell$. To keep track of the equivalence between the two Iwasawa decompositions, it is convenient to picture the cell decomposition as a ribbon graph, with ribbon edges, ribbon vertices, and faces. Each ribbon edge can be seen as a box like in the Figure \ref{fig:RibbonEdge}, with group elements attached to the boundary of the box. The two ways to go from one corner to the opposite corner correspond to the constraint $\ell u = \tu \tell$.

\begin{figure}
\includegraphics[scale=.5]{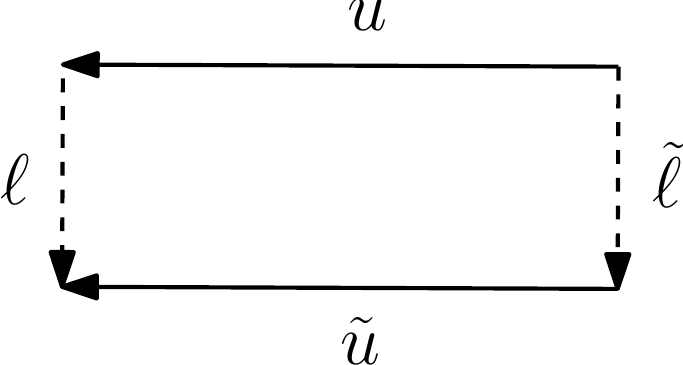}
\caption{An edge of the cell decomposition carries the variables $(\ell, u)$, or equivalently $(\tell, \tu)$. To represent the constraint $\ell u = \tu \tell$, it is convenient to thicken the edge and turn it to a ribbon edge  on which the constraint is the "commutativity" of the box.}\label{fig:RibbonEdge}
\end{figure}

This model can be seen as a deformation of the usual discretized model for $BF$ with $\Lambda=0$ \cite{HyperbolicPhaseSpace}. In this case, the Heisenberg double is given by the group $\ISO(3)$. The variables $\ell$ (and $\tell$) reduces to the generators of $\su(2)$, known as fluxes in the loop quantum gravity literature, while the variables $u$ (and $\tu$) are $\SU(2)$ holonomies. Following the construction of the flat case, we will call the variables $\ell$ and $\tell$ the momentum variables, whereas $u$ and $\tu$ are the configuration variables, or the holonomies.

The dynamics is given by a set of first class constraints. We have the Gauss constraints which restrict the momentum variables and the "flatness" constraints which restrict the configuration variables. We showed in \cite{HyperbolicPhaseSpace} how this set of constraints defines a topological model. Let us recall the construction of those constraints.

When ribbon edges are connected to a ribbon vertex $v$, the boundary of the vertex carries some matrices $\ell$ and $\tell$, with labels $1,\dotsc,N_v$. The Gauss law is defined by the requirement that the (oriented) product of those matrices around the vertex is trivial,
\be
\mathcal{G}_v \equiv \mathcal{L}_1 \dotsm \mathcal{L}_{N_v} = \one,\qquad\text{with $\mathcal{L}_i = \ell_i$ or $\tell^{-1}_i$}.
\ee
When the edge is oriented inward, $\cL_e = \ell_e$, and when it is outward, $\cL_e = \tell^{-1}_e$. These constraints generate local (vertex-based) $\SU(2)$ transformations on $\ell, u, \tell, \tu$ via the Poisson brackets \cite{HyperbolicPhaseSpace}. Importantly, those $\SU(2)$ transformations incorporate some ``braiding''. Indeed, consider for simplicity that the edges incident at $v$ are oriented inwards. Then, the constraint $\ell_1\ell_2\dotsm =\one$ generates a left transformation on the $\SL(2,\C)$ element $\ell_1 u_1 \to v\ell_1 u_1$, for $v\in\SU(2)$. The Iwasawa decomposition allows to find a unique $\ell_1'$ and a unique $u_1'$ such that $v \ell_1 u_1 = \ell_1' u_1'$. Defining $v'\in\SU(2)$ through $v\ell_1 = \ell_1'v'$, we find
\be
\ell_1'= v\ell_1 v'^{-1},\qquad u_1'=v'u_1.
\ee
with $\ell_1'\in\SB$. Then, on the edge 2, the Gauss law generates an $\SU(2)$ transformation with the ``braided'' rotation $v'$ instead of $v$,
\be
\ell_2'= v'\ell_2 v''^{-1},\qquad u_2'=v''u_2,
\ee
where $v''\in\SU(2)$ is defined through the Iwasawa decomposition applied to the $\SL(2,\C)$ equality $v'\ell_2 = \ell_2' v''$. It continues this way all around the vertex \cite{HyperbolicPhaseSpace}.

The second set of constraints generalizes the flatness constraint of regular BF theory and actually has the same form. This ``flatness'' constraint is the requirement that the product of $\SU(2)$ elements along the boundary of each face $f$ is trivial,
\be
\mathcal{C}_f \equiv \mathcal{U}_1 \dotsm \mathcal{U}_{N_f} = \one,\qquad\text{with $\mathcal{U}_i = u_i$ or $\tu^{-1}_i$},
\ee
These constraints generate (right and left) multiplication of $\ell, u$ by lower triangular matrices which can be viewed as a deformation of the $\ISO(3)$ ($\kappa=0$ case) translations \cite{HyperbolicPhaseSpace}. The whole set of constraints is easily visualized graphically, like in  Figure \ref{fig:RibbonGraph}.
\begin{figure}
\includegraphics[scale=.5]{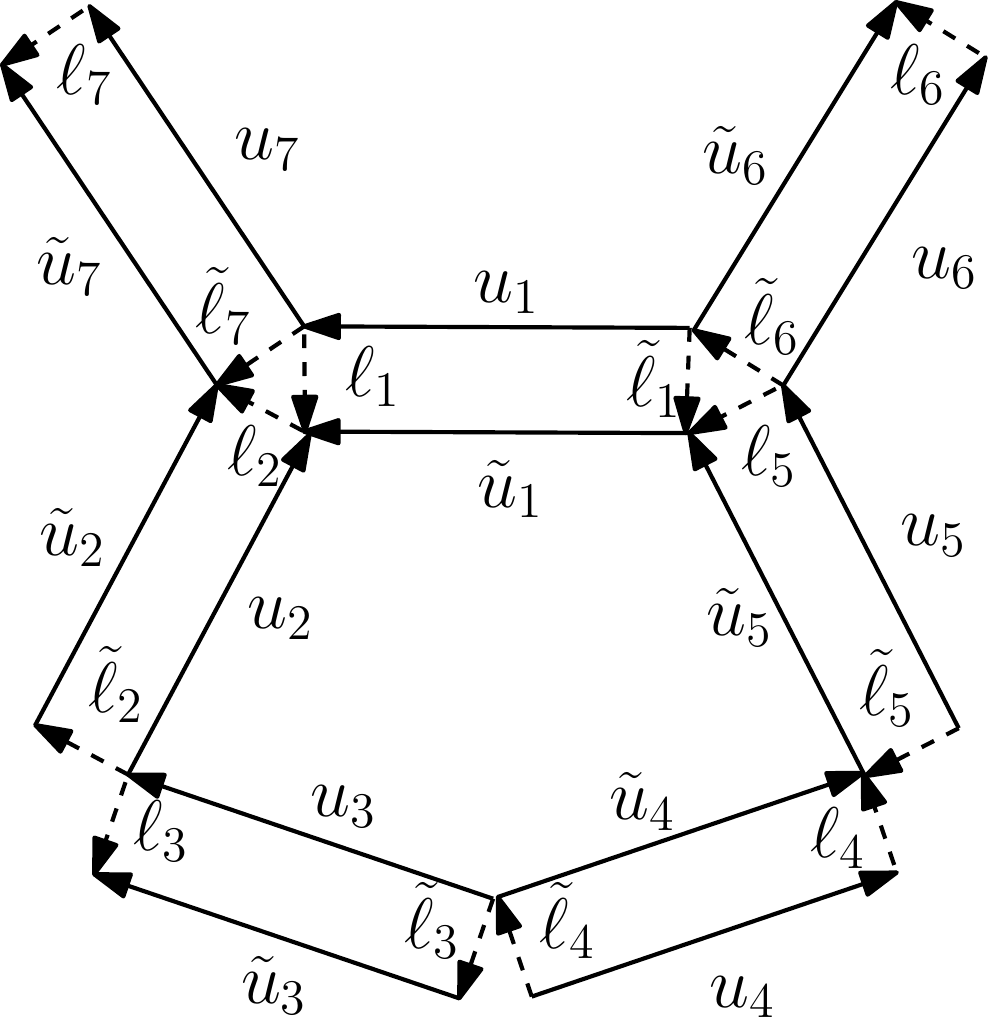}
\caption{This is an open portion of a ribbon graph. The edges 1, 2, 7 meet at a ribbon vertex with the constraint $\tell_7^{-1} \ell_2 \ell_1=\one$, and the edges 1, 5, 6 meet at another vertex with the constraint $\ell_5\tell_6^{-1}\tell_1^{-1}=\one$. The edges 1, 2, 3, 4, 5 close to form a face with the constraint $\tu_1^{-1} u_2 u_3 \tu_4^{-1}\tu_5^{-1}=\one$. }\label{fig:RibbonGraph}
\end{figure}

\section{Quantization of the Gauss law}

We would like to consider the quantization of the Gauss constraint, which is expressed in terms of the momentum variables $\ell_i, \, \tell_j$. Hence we need to quantize the algebra of functions on momentum space, which can be viewed here as the algebra $M(\SB)$ generated by the matrix elements of $\SB$. After recalling the Hopf algebra structure of  $M(\SB)$, we are going to recall how upon quantization, the algebra of quantum observables is equivalent to $\UQ$. We will see then how the classical Gauss law we have defined leads to $\UQ$ intertwiners, which are used in LQG based on $\UQ$ \cite{ours1}.

\subsection{Recovering $\UQ$}

As recalled in  \cite{HyperbolicPhaseSpace}, we can interpret the Heisenberg double $\SL(2,\C)$ as the phase space of a deformed top. The Gauss constraint can be interpreted as saying that the total (deformed) angular momentum of the tops meeting at this vertex ought to be zero.

Before discussing the quantization of the deformed momentum variables let us recall the key mathematical objects that are needed when quantizing a set of identical classical systems, in a simple known example.

\medskip

Consider the classical algebra of momentum observables given by $C^{\infty}(\R)$, spanned by monomials in $p$, and quantized in the algebra $\cA$. The momentum operator $\hat P$ is diagonal in the standard momentum basis of 1-particle states, $\hat P|p\ra= p|p\ra $. The mathematical structures required in the construction of the inverse momentum and of the total momentum for two particles 
are respectively given by the antipode and the coproduct.  (We could also consider the analogue of the zero momentum which would be given by the counit $\veps$ but it will not be used in the present article.)

\begin{itemize}
\item The coproduct $\cop:\cA\dr \cA\ot \cA$ dictates how to recover an observable for a 2-particle system from the knowledge of this observable on a 1-particle system. By induction, we can have the observable for an $n$-particle system.
For example, the total momentum of a 2-particle state $|p_1,p_2\ra$ is given by  $\cop \hat P = \hat P\ot \one + \one \ot \hat P$, so that $\cop \hat P|p_1,p_2\ra = (p_1+p_2)|p_1,p_2\ra$.

\item  The antipode $S:\cA\dr \cA$ provides in this case the notion of inverse momentum since $S(\hat P) = -\hat P$ and $S(\hat P)|p\ra= -p|p\ra $.


\end{itemize}
The objects, $\cop, \, S$ (together with the counit $\veps$) satisfy some compatibility relations, so that $\cA$ becomes a Hopf algebra. One property is that the coproduct is coassociative. In the previous example it is cocommutative but it is not in general true, as we will see in the case of interest for this paper. For a review of Hopf algebras we refer to \cite{majid}.

\medskip

We need now to generalize this construction to our case. We deal with the algebra of momentum observables $M(\SB)$ generated  by the matrix elements of $\SB$, which are $\lambda, \lambda\mone, z, \ov{z}$. The relevant features of the coalgebra of  $M(\SB)$ are given by the following coproduct $\Delta$ and antipode $S$ \cite{majid}
\bes \label{CoalgebraMSB}
&& \begin{array}{cccc}
\cop :&M(\SB) &\dr& M(\SB)\,\ot\, M(\SB) \\
&m_{ij}& \dr& \cop m_{ij} = \sum_k \,m_{ik} \,\ot\, m_{kj},
\end{array} \qquad
\begin{array}{cccc}
 S :&M(\SB) &\dr& M(\SB) \\
&m_{ij}& \dr& S(m_{ij})= {(m\mone)}_{ij}
\end{array}
\ees
We are now ready to quantize our momentum variables $\ell$ and $\ell^{\dagger}\mone$. We introduce $q= e^{\hbar\kappa}$ and the quantization rule \cite{SternYakushin}
\be
\lambda \,\dr\, K, \quad \lambda\mone \,\dr\,  K\mone, \quad z \,\dr\,  (q^{\frac12} - q^{-\frac12}) J_+, \quad \ov{z} \,\dr\,  -(q^{\frac12} - q^{-\frac12}) J_-,
\ee
where $K, J_\pm$ are operators. Hence we get the quantum matrices
\be \label{QuantumEll}
\widehat{\ell} = \begin{pmatrix} K & 0 \\ (q^{\frac12} - q^{-\frac12}) J_+ & K^{-1} \end{pmatrix},
\qquad \widehat{\ell}^{\dagger-1} = \begin{pmatrix} K^{-1} & -(q^{\frac12} - q^{-\frac12}) J_- \\ 0 & K \end{pmatrix}.
\ee
The commutation relations of the operators $J_\pm$ and $K$ are given by the quantization of the Poisson brackets \eqref{pb},
\be \label{RLL=LLR}
\cR\,\widehat{\ell}_1\,\widehat{\ell}_2 = \widehat{\ell}_2\,\widehat{\ell}_1\,\cR,\qquad
\cR\,\widehat{\ell}_1^{\dagger -1}\,\widehat{\ell}_2^{\dagger -1} = \widehat{\ell}_2^{\dagger -1}\,\widehat{\ell}_1^{\dagger -1}\,\cR, \qquad
\cR\,\widehat{\ell}_1\,\widehat{\ell}_2^{\dagger -1} = \widehat{\ell}_2^{\dagger -1}\,\widehat{\ell}_1\,\cR,
\ee
where the quantum $\cR$-matrix is
\be
\cR = \begin{pmatrix} q^{\frac14} & 0 & 0 & 0\\ 0 & q^{-\frac14} & 0 & 0\\ 0 & q^{-\frac14} (q^{\frac12} - q^{-\frac12}) & q^{-\frac14} & 0\\ 0 & 0 & 0 & q^{\frac14} \end{pmatrix}.
\ee

The $\cR$-matrix expands as $\cR = \one -i\hbar\,r + \mathcal{O}(\hbar^2)$. Therefore the commutation relation gives at order $\hbar$, $[\widehat{\ell}_1,\widehat{\ell}_2] \simeq i\hbar [r,\widehat{\ell}_1\widehat{\ell}_2]$ and the classical algebra is recovered.

In components, the commutation relations \eqref{RLL=LLR} read
\be
K\,J_+\,K^{-1} = q^{\frac12}\,J_+,\qquad K\,J_-\,K^{-1} = q^{-\frac12}\,J_-,\qquad [J_+,J_-] = \frac{K^2 - K^{-2}}{q^{\frac12} - q^{-\frac12}}.
\ee
These are the commutation relations of $\mathcal{U}_q(\SU(2))$. We have recovered the well-known fact that the quantization of $M(\SB)$ leads to the algebra $\mathcal{U}_q(\SU(2))$.

Let us have a look at the coalgebra sector. First, the coproduct will provide us the total (deformed) quantum angular momentum and hence the quantum analogue of the Gauss law. Quantization simply promotes the coproduct of $M(\SB)$ in Equation \eqref{CoalgebraMSB} to an operator-valued coproduct,
\be
\cop\hell_{ij} = \sum_k \hell_{ik}\otimes \hell_{kj},
\ee
and similarly on $\hell^{\dagger -1}$. In matrix form, we get
\be
\begin{aligned}
&\cop \hell
= \begin{pmatrix} K\otimes K & 0\\ (q^{\frac12}-q^{-\frac12}) (J_{+} \otimes K + K^{-1}\otimes J_+) & K^{-1}\otimes K^{-1} \end{pmatrix},\\
\text{and}\qquad &\cop \hell^\dagger\mone
= \begin{pmatrix} K\otimes K &  -(q^{\frac12}-q^{-\frac12}) (J_{-} \otimes K + K^{-1}\otimes J_-)\\ 0 & K^{-1}\otimes K^{-1} \end{pmatrix}.
\end{aligned}
\ee
We here recognize the \emph{coproduct} $\cop$ of $\mathcal{U}_q(\SU(2))$,
\be
\Delta(K^{\pm1}) = K^{\pm1}\otimes K^{\pm1},\qquad \Delta(J_\pm) = J_\pm\otimes K + K^{-1}\otimes J_\pm.
\ee

The antipode of $M(\SB)$ simply is the matrix inversion. We require this to also hold quantum mechanically. Assuming a quantization map of the form
\be
\ell^{-1}\ \to\ S(\widehat{\ell}) = \begin{pmatrix} S(K) & 0\\ (q^{\frac12}-q^{-\frac12}) S(J_+) & S(K^{-1}) \end{pmatrix},
\ee
the equation determining $S(K^{\pm1}), S(J_\pm)$ is
\be
\widehat{\ell}\ S(\widehat{\ell}) = \begin{pmatrix} K & 0\\ (q^{\frac12}-q^{-\frac12}) J_+ & K^{-1} \end{pmatrix}  \begin{pmatrix} S(K) & 0\\ (q^{\frac12}-q^{-\frac12}) S(J_+) & S(K^{-1}) \end{pmatrix}  = \begin{pmatrix} \one &0\\ 0&\one\end{pmatrix},
\ee
and similarly for $ \widehat{\ell}^\dagger\mone$. The solution is
\be
S(K^{\pm1}) = K^{\mp1}, \quad S(J_\pm) = - K\,J_\pm\,K^{-1} = -q^{\pm\frac12} J_\pm,
\ee
and the antipode of $\UQ$ is recovered. The definition of the counit $\veps$ goes in the same way. Since we do not use it later, we do not dwell on it.


\subsection{Intertwiners}
As we have seen above the algebra of quantum momentum observables  is essentially given by $\UQ$. To construct the Hilbert space of our theory, we consider the representation of $\UQ$ with $q$ real, since we took $q=e^{\hbar \kappa}$. This is convenient since in this case the representation theory of $\UQ$ is essentially the same as the one of $\su(2)$. We refer to the appendix \ref{uq} for the details.

\subsubsection{All edges inward}

Let us consider a vertex $v$, with $n$ incident edges and attribute an irreducible representation $\cH_{j_i}$ of $\UQ$ to each edge. The total Hilbert space of the vertex $v$ is $\cH_v= \bigotimes_{i=0}^n \cH_{j_i} \equiv \bigotimes_{i=0}^n j_{i} $. As such it is a (reducible) representation of $\UQ$. The action of the generators of $\UQ$ on $\cH_v$ is given by applying $n-1$ times the coproduct, for example $\cop^{(n-1)} J_\pm = (\one\ot ..\ot \one\ot\cop)\circ..\circ (\one\ot \cop)\circ \cop$. Thanks to the coassociativity, there is no issue in grouping the terms.

\smallskip

Note however that the coproduct is not cocommutative, hence we need to order the edges and the Hilbert spaces $\cH_{j_i}$. If we want to change the order, we need to use the deformed permutation map $\psi_{\cR}$. First let us note  $\psi: \UQ\ot \UQ \dr \UQ\ot \UQ$, the usual permutation, then define
\bes\label{deformedPerm}
\psir:  V\ot W &\dr& W\ot V \nn\\
v\ot w &\mapsto &\psir(|v, w\ra)\equiv \psi (\cR |v, w\ra) = \sum \psi (|\cR_{1} v ,\cR_{2} w\ra)  = \sum |\cR_{2} w , \cR_{1}v\ra,
\ees
where we used the notation $\cR=\sum \cR_1 \otimes \cR_2$. The $\cR$-matrix also encodes ``how much'' the coproduct is non-cocommutative, through the relation \cite{chari}
\beq\label{braiding}
(\psi\circ  \cop) X = \cR (\cop X) \cR\mone, \textrm{ with } X \textrm{ a generator of } \UQ.
\eeq
Thanks to this braiding, we recover that the action of the $\UQ$ generators $X$ and the (braided) permutation commute.
\beq\nn
\psir (X (|v, w\ra)) = \psi (\cR X( |v, w\ra))= \psi (\cR (\cop X) |v, w\ra)=  \psi ((\psi \circ \cop X) \cR |v, w\ra)= (\cop X) \psi (\cR |v, w\ra)= X(\psir(|v, w\ra)).
\eeq

\smallskip

Let us consider first the quantization of the trivial Gauss law, namely $\ell_1\ell_2=\one= ((\ell_1\ell_2)^\dagger)\mone$. Since we have two edges, the kinematical Hilbert space is $\cH_v=  j_1 \ot {j_2}$ and a general state will be $i_{j_1j_2}=\sum_{m_1,m_2} i_{m_1,m_2}|j_1m_1,j_2m_2\ra$. The quantum version of $\ell_1\ell_2=\one$ is $\cop \hell =\one\ot \one=\cop \hell^ \dagger \mone $, or in terms of components,
\be
\Delta(K^{\pm1}) = K^{\pm1}\otimes K^{\pm1} =\one ,\qquad \Delta(J_+) = J_\pm\otimes K + K^{-1}\otimes J_\pm =0.
\ee

We want to find the states $i_{j_1j_2}$ that solve these constraints. Looking at the different components of the total quantum angular momentum, we must have
\beq
(\cop J_\pm ) i_{j_1j_2}= 0 , \quad (\cop K^{\pm 1} ) i_{j_1j_2}= i_{j_1j_2}.
\eeq
This is essentially demanding that $i_{j_1j_2}$ is the 2-valent $\UQ$ intertwiner, defined in terms of the Clebsh-Gordon (CG) coefficients,
\be
i^{j_1\,j_2}_{m_1 m_2} = C^{j_1\,j_2\,0}_{m_1 m_2 0} = \delta_{j_1, j_2}\,\delta_{m_1,-m_2}\,\frac{(-1)^{j_1-m_1} q^{\frac{m_1}{2}}}{\sqrt{[2j_1+1]}}.
\ee
As  expected, the quantum Gauss constraint projects $\cH_v$ onto the trivial $\UQ$ representation.

The construction extends naturally for a vertex with $n$ edges. The quantum Gauss constraint reads
\bes\label{ngauss}
\cop^{(n-1)}\hell& =&
\begin{pmatrix} K\otimes \dotsb \otimes K & 0\\ (q^{\frac12}-q^{-\frac12})\left(\sum_{i=1}^n K^{-1} \otimes \dotsb \otimes K^{-1} \otimes J_{+} ^{(i)}\otimes K\otimes \dotsb \otimes K\right)  & K^{-1}\otimes \dotsb \otimes K^{-1} \end{pmatrix} \nn\\
&=& \one\ot\dotsb\ot\one,
\ees
and similarly for $\hell^ \dagger \mone $. We need to find the states $i_{j_1..j_n}$ that solve the constraint, meaning
\beq
(\cop^{(n-1)}J_\pm) i_{j_1..j_n}=0, \quad (\cop^{(n-1)}K^{\pm1}) i_{j_1..j_n}=i_{j_1..j_n}.
\eeq
This is the definition of an invariant vector in $\mathcal{H}_v$, i.e. an intertwiner $j_1\otimes\dotsb\otimes j_n\to \C$. We will focus on the 3-valent case, since this is the situation we will have to deal with later on, and like when $q=1$, generic intertwiners can be split into sums of products of 3-valent intertwiners. Expanding a state $i_{j_1j_2j_3} = \sum_{m_i} i^{j_1\,j_2\,j_3}_{m_1 m_2 m_3} |j_1 m_1, j_2 m_2, j_3 m_3\rangle\in j_1\otimes j_2\otimes j_3$, the Gauss law, $\cop^{(2)}\hell\ i_{j_1 j_2 j_3}=i_{j_1 j_2 j_3}$, produces recursions on the coefficients $i^{j_1\,j_2\,j_3}_{m_1 m_2 m_3}$, solved by CG coefficients,
\be
i_{j_1j_2j_3} = \sum_{m_i} (-1)^{j_3-m_3} q^{-\frac{m_3}{2}}\,C^{j_1\ j_2\ j_3}_{m_1 m_2 -m_3} |j_1 m_1, j_2 m_2, j_3 m_3\rangle.
\ee
This state solves the quantum version of the Gauss constraint $\ell_1\ell_2\ell_3=\one$ and $(\ell^\dagger_3\ell^\dagger_2\ell^\dagger_1)\mone=\one$.

\subsubsection{Reverse orientation and dualization}

The Gauss constraints we have just discussed corresponds to the case where all the edges are oriented inwards. If however an edge is outgoing, say without loss of generality the edge 1, the classical Gauss constraint is $\tell_1^{-1} \ell_2  \ell_3 =\one$. The quantization of $\tell$ is exactly the same as for $\ell$. $\tell_1^{-1}$ corresponds to dealing with the inverse momentum, hence at the quantum level, we need to use the antipode. It is thus necessary to compose the coproduct with the antipode on the edge 1,
\beq
\tell_1^{-1} \ell_2  \ell_3  \dr \begin{pmatrix} S(K)\otimes K \otimes K & 0\\ (q^{\frac12}-q^{-\frac12})\left(S(J_+)\ot K\ot K + S(K\mone)\ot J_+ \ot K+  S(K\mone)\ot K\mone  \ot J_+  \right)  & S(K^{-1})\otimes K\mone \otimes K^{-1} \end{pmatrix}.
\eeq

However, the antipode of a representation is not a representation. Just like for $q=1$, it is nevertheless a representation on the \emph{dual} Hilbert space. To turn a vector living in the representation $j$ to a vector in the dual representation $j^*$, dualization is realized through the $\UQ$-invariant bilinear form induced by the CG coefficients projecting on the trivial representation. Given two vectors $|u_i\rangle = \sum_m u_{i}^m |j m\rangle$, the bilinear form $B$ is defined as
\be \label{Bform}
B(u_1,u_2) = \sum_{m_1,m_2} g_{m_1\, m_2} \, u_{1}^{m_1}\,u_{2}^{m_2} =  \sum_{m_1,m_2} (-1)^{j-m_1} q^{\f{m_1}{2}} \delta_{m_1, \, -m_2} \, u_{1}^{m_1}\,u_{2}^{m_2}.
%
\ee
Since  $B$ is neither symmetric nor anti-symmetric\footnote{At $q=1$, it has the property $B(u,v) = (-1)^{2j} B(v,u)$ if $u,v$ are in the representation of spin $j$,
For details on the symmetry properties for $q\neq 1$, we refer to \cite{ours2}.} we can define two types of covector, according to the side we contract $g_{m_1\, m_2}$.
\be\label{DualCoVector}
\langle u^*| = \sum_m (-1)^{j+m}q^{-\frac{m}{2}} u_{-m} \langle jm|, \quad
\langle \overline u| = \sum_m (-1)^{j-m}q^{\frac{m}{2}} u_{-m} \langle jm| \quad \dr \quad B(u_1,u_2) = \langle u_1^*|u_2\rangle = \langle \ov u_2|u_1\rangle.
\ee
In the following, we shall use mostly the notion of duality given by $u^*$.

If $|u\rangle$ transforms with $J_\pm, K$, and denoting $J_\pm |u \rangle\equiv | u_\pm\ra$ and $K|u \ra \equiv |u_K\ra$, then
\be
\la u_\pm^\ast |= \la u^\ast | S(J_\pm),\qquad \text{and}\qquad \la u_K^\ast|= \la u^\ast | S(K),
\ee
i.e. the co-vector $\la u^\ast|$ transforms with the antipode of the generators, $S(J_\pm) = -q^{\pm\frac12}J_\pm$ and $S(K)=K^{-1}$, as requested.

The equation \eqref{DualCoVector} shows that $*$ dualization amounts to changing the components $u_m$ to $(-1)^{j-m}q^{-\frac{m}{2}} u_{-m}$.
 We conclude that the states in $j_1^*\otimes j_2\otimes j_3$ which satisfy the quantum analogue of Gauss law $\tell_1^{-1} \ell_2 \ell_3=\one$ are proportional to
\be
i_{j_1^* j_2 j_3} = \sum_{m_i}  (-1)^{j_1-m_1} q^{-\frac{m_1}{2}}\ (-1)^{j_3-m_3} q^{-\frac{m_3}{2}}\ C^{j_1\ j_2\ j_3}_{-m_1 m_2 -m_3}\ \langle j_1m_1|\otimes |j_2 m_2\rangle\otimes |j_3 m_3\rangle.
\ee
Similarly with the constraint $\ell_1 \tell_2^{-1} \ell_3 = \one$ on $j_1\otimes j_2^*\otimes j_3$,
\be
i_{j_1 j_2^* j_3} = \sum_{m_i}  (-1)^{j_2-m_2} q^{-\frac{m_2}{2}}\ (-1)^{j_3-m_3} q^{-\frac{m_3}{2}}\ C^{j_1\ j_2\ j_3}_{m_1 -m_2 -m_3}\ | j_1 m_1\rangle\otimes \langle j_2 m_2|\otimes |j_3 m_3\rangle,
\ee
and on $j_1\otimes j_2\otimes j_3^*$ with $\ell_1\ell_2\tell_3^{-1}=\one$,
\be \label{123*}
i_{j_1 j_2 j_3^*} = \sum_{m_i}  C^{j_1\ j_2\ j_3}_{m_1 m_2 m_3}\ | j_1 m_1\rangle\otimes |j_2 m_2\rangle \otimes \langle j_3 m_3|.
\ee

Dualization also applies to operators, and this is something we will need.  Assume that $f:V\to W$ is a linear map between two $\UQ$ modules with invariant bilinear forms $B_V, B_W$. Its adjoint $f^\dagger: W \to V$ is defined by
\be
B_V\bigl(v,f^\dagger(w)\bigr) = B_W\bigl(f(v),w\bigr).
\ee
Setting $V=j$, $W=l$, Hilbert spaces carrying the irreducible representations with spin $j,l$, we know that $B_V= \sqrt{[d_j]}\,C^{jj0}_{mn0}, B_W = \sqrt{[d_l]}\,C^{ll0}_{qr0}$ are invariant bilinear forms. Just as for vectors, we can define two types of adjoint, according to the vector duality we choose. In the following, we use the duality induced by $*$, hence the associated duality for operators, noted $\dagger$ is
\be \label{OperatorDualization}
\langle j m|f^\dagger|ln\rangle = (-1)^{j+m} q^{\frac{m}{2}}\,(-1)^{l+n} q^{-\frac{n}{2}}\ \langle l,-n|f|j,-m\rangle.
\ee

\medskip

To sum up, we have obtained the quantization of the Gauss law, for any orientation of the edges, and showed that the $\UQ$ intertwiners are naturally the states satisfying this quantum constraint.

\section{Quantization of vectors and their scalar products} \label{sec:Vectors}

\subsection{Quantization of vectors}

Classically, there is a \emph{left action} of $\SU(2)$ on $\ell\in\SB$ (and $u\in\SU(2)$). For $v\in\SU(2)$, the $\SB$ element $\ell$ is multiplied on the left by $v$, but the result is in $\SL(2,\C)$ rather $\SB$. To identify a well-defined transformation on $\SB$, we have to make use of the Iwasawa decomposition on the product $v\ell$. There exists a unique $v'\in\SU(2)$ and a unique lower triangular matrix $\ell'$ such that
\be \label{SU2onEll}
v \ell = \ell' v'.
\ee
This gives the transformations
\be
\ell\,\mapsto  v \ell v'^{-1}, \qquad\text{and therefore}\quad \ell \ell^\dagger\, \mapsto  v\, \ell \ell^\dagger \, v\mone.
\ee
This transformation is obviously implemented on the cell decomposition by the Gauss law at the target vertex of the edge carrying $\ell$ and $u$, as explained in the Section \ref{sec:Classical}.

It comes that the product $\ell \ell^\dagger$ transforms in the \emph{adjoint representation} of $\SU(2)$, when considering transformations on the left as in \eqref{SU2onEll}. This implies that the trace $\tr \ell\ell^\dagger$ is $\SU(2)$ invariant, and moreover $\vec{T}\equiv \tr (\ell\ell^\dagger \vec{\sigma})$ is a 3-vector, i.e. it lives in the vector representation. This is equivalent to the representation of spin 1, and the components of $\vec{T}$ in the spherical basis\footnote{We use the standard notations, $\sigma_z = \left(\begin{smallmatrix} 1&0\\0&-1\end{smallmatrix}\right), \sigma_+ = \left(\begin{smallmatrix} 0&1\\0&0\end{smallmatrix}\right), \sigma_- = \left(\begin{smallmatrix} 0&0\\1&0\end{smallmatrix}\right)$.} are
\be \label{TComponents}
T_1 = \tr \ell\ell^{\dagger} \sigma_+,\qquad
\sqrt{2}\,T_0 = -\tr \ell\ell^\dagger \sigma_z,\qquad
T_{-1} = -\tr \ell\ell^\dagger \sigma_-.
\ee

Now let us investigate the quantum version. The key object is the matrix
\be
\ell\ell^\dagger = \begin{pmatrix} \lambda^2 & \lambda \bar{z} \\ \lambda z & \lambda^{-2}+|z|^2\end{pmatrix}.
\ee
The quantization \eqref{QuantumEll} provides $\widehat{\ell}$ and $\widehat{\ell}^{\dagger-1}$. To quantize $\ell^\dagger$, we  use the antipode on $\widehat{\ell}^{\dagger-1}$. This leads to
\be
\begin{aligned}
\widehat{\ell}\ S(\widehat{\ell}^{\dagger-1}) &= \begin{pmatrix} K & 0\\ (q^{\frac12}-q^{-\frac12})J_+ & K^{-1} \end{pmatrix} \begin{pmatrix} K & (q^{\frac12}-q^{-\frac12})q^{-\frac12} J_- \\ 0 & K^{-1} \end{pmatrix}\\
&= \begin{pmatrix} K^2 & (q^{\frac12}-q^{-\frac12})q^{-\frac12} K J_- \\ (q^{\frac12}-q^{-\frac12}) J_+ K & (q^{\frac12}-q^{-\frac12})^2 q^{-\frac12} J_+ J_- + K^{-2} \end{pmatrix}.
\end{aligned}
\ee
We want to investigate whether the quantization of the 3-vector $\vec{T} = \tr\ell\ell^\dagger\vec{\sigma}$ leads to something that it is still a vector (of operators). We form the combinations
\be
t_1 = \frac{q^{\frac12}}{q^{\frac12}-q^{-\frac12}}\,\tr \widehat{\ell}\ S(\widehat{\ell}^{\dagger-1}) \sigma_+,\qquad
\sqrt{[2]}\,t_0 = -\frac{q^{\frac12}}{q^{\frac12}-q^{-\frac12}}\,\tr \widehat{\ell}\ S(\widehat{\ell}^{\dagger-1}) \sigma_z,\qquad
t_{-1} = -\frac{q^{\frac12}}{q^{\frac12}-q^{-\frac12}}\,\tr \widehat{\ell}\ S(\widehat{\ell}^{\dagger-1}) \sigma_-,
\ee
meant to be the quantum version of the components in  \eqref{TComponents}. Using the commutation relations of the generators, a short calculation shows that
\be\label{t comp}
t_1 = KJ_+,\qquad t_0 = -\frac{1}{\sqrt{[2]}}\,(q^{-\frac12}J_+J_- - q^{\frac12}J_-J_+),\qquad t_{-1} = -KJ_-.
\ee
Remarkably, $\bt=(t_{-1},t_0,t_1)$ is recognized as a \emph{vector operator} (see for example \cite{ours2}). It is a vector of operators, whose transformation as a vector (on the index $A=-1,0,1$ in the representation of spin 1) is equivalent to its transformation as a matrix under the adjoint representation\footnote{The adjoint action of a generator $X$ of $\UQ$ on the operator $\cO$ is defined as $X\act\cO = X_{(1)} \cO S(X_{(2)}) $, with $\cop X= X_{(1)}\ot X_{(2)}$. } of $\UQ$,
\be \label{TransfoVectorOperator}
K\triangleright t_A = q^{\frac{A}{2}}\,t_A = K\,t_A\,K^{-1},\qquad \text{and}\qquad J_\pm \triangleright t_A = \sqrt{[1\mp A][1\pm A+1]}\,t_{A\pm1} = J_\pm\,t_A\,S(K) + K^{-1}\,t_A\,S(J_\pm).
\ee
Therefore \emph{the vector operator $\bt$ is the quantization of the classical vector $\vec{T}$}.

From the Wigner-Eckart theorem, it follows that its matrix elements are proportional to CG coefficients,
\be \label{WignerEckartT}
\langle jn|t_A|jm\rangle = N_j\ C^{1\,j\,j}_{A m n},\qquad \text{with $N_j = \sqrt{\frac{[2j][2j+2]}{[2]}}$}.
\ee
When $q\to 1$, the vector operator remains a vector operator: it is formed by the generators of $\SU(2)$, which transform with the $q\to 1$ limit of \eqref{TransfoVectorOperator}.
$$t_A \underset{q\to 1}{\dr} (J_+,-\sqrt{2} J_z, -J_-), \quad J_z\triangleright t_A = [J_z,t_A]= A\,t_A, \quad  J_\pm \triangleright t_A = [J_\pm, t_A] = \sqrt{(1\mp A)(1\pm A+1)}\,t_{A\pm1}.$$
The vector operator $\bt$ is a set of  operators, each of them being represented as a square matrix. We can have more general vector operators with components given by  rectangular matrices. We note $\tau$ such set of operators which still satisfies
\be \label{TransfoVectorOperator bis}
K\triangleright \tau_A = q^{\frac{A}{2}}\,\tau_A = K\,\tau_A\,K^{-1},\qquad \text{and}\qquad J_\pm \triangleright \tau_A = \sqrt{[1\mp A][1\pm A+1]}\,\tau_{A\pm1} = J_\pm\,\tau_A\,S(K) + K^{-1}\,\tau_A\,S(J_\pm),
\ee
with the matrix elements  given by
\be \label{WignerEckartTgeneral}
\langle Jn|\tau_A|jm\rangle = N_{jJ} \ C^{1\,j\,J}_{A m n}.
\ee
Due to the triangular constraints from the CG coefficient, we can have $J=j-1,j,j+1$. When $J=j$, we recover $\bt$ with $N_j=N_{jJ} \delta_{jJ}$.

\medskip

At the classical level, another vector-like quantity can be built from $\ell$ and $\ell^\dagger$: $\vec{T}^{\operatorname{op}}\equiv \tr (\ell^\dagger \ell \vec{\sigma})$. Explicitely, its components are given by
\be \label{TopComponents}
T^{\operatorname{op}}_1 = \tr \ell^\dagger\ell\sigma_+,\qquad
\sqrt{2}\,T^{\operatorname{op}}_0 = \tr \ell^\dagger\ell\sigma_z,\qquad
T^{\operatorname{op}}_{-1} = -\tr \ell^\dagger\ell\sigma_-.
\ee
However, while $\ell\ell^\dagger$ transforms as $v\ell\ell^\dagger v^{-1}$, it turns out that $\ell^\dagger \ell$  transforms under the adjoint action of $v^\prime$ instead of $v$,
\be
\ell^\dagger \ell\mapsto v' \ell^\dagger \ell v'^{-1},
\ee
where $v'$ is defined by \eqref{SU2onEll}. As a consequence, the quantized version of $\vec{T}^{\operatorname{op}}$ is not going to transform as a vector under the adjoint action of $\UQ$. In matrix form, we get
\be
S(\widehat{\ell}^{\dagger -1})\,\widehat{\ell} = \begin{pmatrix} K^2 + (q^{\frac12}-q^{-\frac12})^2 q^{-\frac12} J_-J_+ & (q^{\frac12}-q^{-\frac12}) q^{-\frac12} J_- K^{-1} \\ (q^{\frac12}-q^{-\frac12}) q^{-\frac12} J_+ K^{-1} & K^{-2} \end{pmatrix}.
\ee
We proceed just like we did with $\ell \ell^\dagger$, forming the combinations
\be
t^{\operatorname{op}}_1 = \frac{q^{\frac12}}{q^{\frac12}-q^{-\frac12}}\,\tr S(\widehat{\ell}^{\dagger-1})\,\widehat{\ell}\, \sigma_+,\qquad
\sqrt{[2]}\,t^{\operatorname{op}}_0 = \frac{q^{\frac12}}{q^{\frac12}-q^{-\frac12}}\,\tr S(\widehat{\ell}^{\dagger-1})\, \widehat{\ell}\, \sigma_z,\qquad
t^{\operatorname{op}}_{-1} = -\frac{q^{\frac12}}{q^{\frac12}-q^{-\frac12}}\,\tr S(\widehat{\ell}^{\dagger-1})\,\widehat{\ell}\, \sigma_-,
\ee
which are the quantization of the components \eqref{TopComponents} of $\vec{T}^{\operatorname{op}}$. In terms of the generators of $\UQ$, the components read
\be\label{top comp}
t^{\operatorname{op}}_1 = J_+\,K^{-1},\qquad
t^{\operatorname{op}}_0 = \frac1{\sqrt{[2]}}\,(q^{\frac12} J_+ J_- - q^{-\frac12} J_- J_+)
,\qquad
t^{\operatorname{op}}_{-1} = -J_-\,K^{-1}.
\ee
As expected, $\bt^{\operatorname{op}}=(t^{\operatorname{op}}_{-1},t^{\operatorname{op}}_{0},t^{\operatorname{op}}_{1})$ does not transform as a vector under the adjoint action of $\UQ$. It however does transform as a vector under the \textit{coadjoint action}\footnote{The coadjoint action of a generator $X$ of $\UQ$ on the operator $\cO$ is defined as $X\blacktriangleright\cO = S(X_{(1)}) \cO X_{(2)} $, with $\cop X= X_{(1)}\ot X_{(2)}$. } of $\UQ$,
\be \label{EquivarianceTop}
K\blacktriangleright\,t^{\operatorname{op}}_A = q^{-\frac{A}{2}}\,t^{\operatorname{op}}_A = S(K) t^{\operatorname{op}}_A K,\qquad J_\pm \blacktriangleright\, t^{\operatorname{op}}_A = \sqrt{[1\mp A][1\pm A+1]}\,t^{\operatorname{op}}_{A\pm1} = S(J_\pm) t^{\operatorname{op}}_A K + S(K^{-1}) t^{\operatorname{op}}_A J_\pm.
\ee
The reason for this equivariance property can be traced back to the fact that already at the classical level, $\ell^\dagger \ell$ transforms under the coadjoint action of $\SU(2)$ when considering the \emph{right action} of $\SU(2)$ on $\SB$. That is, the right action of $\SU(2)$ is $\ell \longrightarrow \ell \tilde{v}$ for $\tilde{v} \in \SU(2)$, which implies
\be
\ell^\dagger \ell \mapsto \tilde{v}^{-1} \ell^\dagger \ell \tilde{v},
\ee
and this is the coadjoint action.

In the limit $q\to 1$, we recover again that the $\bt^{\operatorname{op}}$ components are proportional to the generators of $\su(2)$, which transforms with the $q\to 1$ limit of \eqref{EquivarianceTop},
\be
t^{\operatorname{op}}  \underset{q\to 1}{\dr}  (J_+,\sqrt{2} J_z, -J_-), \quad J_z\blacktriangleright t^{\operatorname{op}}_A = - [J_z,t^{\operatorname{op}}_A] =-A \,t^{\operatorname{op}}_A , \quad J_\pm \blacktriangleright\, t^{\operatorname{op}}_A = \sqrt{[1\mp A][1\pm A+1]}\,t^{\operatorname{op}}_{A\pm1} = -[J_\pm, t^{\operatorname{op}}_A ]
\ee
This is expected since $\vec T$ and $\vec T^{\operatorname{op}}$ coincide up to the factor $(-1)^{1-A}$ already at the classical level when $\kappa\to 0$.
The two vectors $\bt$ and $\bt^{\operatorname{op}}$ can  be related thanks to the antipode, namely  $t^{\operatorname{op}}_A=(-1)^Aq^{-\f{A}{2}} S(t_A)$. When  $q=1$, this simplifies to   $t^{\operatorname{op}}_A = (-1)^{1-A}t_A$, just as  $\vec T$ and $\vec T^{\operatorname{op}}$ when $\kappa\to 0$.

\medskip

It is also worth mentioning that $\bt^{\operatorname{op}}$ is related to the conjugate tensor  $\bar{\bt}$, defined by  in an analogous way as in \eqref{DualCoVector} \cite{biedenharn}
\be \label{conjugate}
\bar{t}_A=(-1)^{1-A} q^{\f{A}{2}} t_{-A}= \left( \tabl{c}{-J_-K \\ \f{1}{\sqrt{2}} (q^{-\f12} J_+ J_- -q^{\f12} J_-J_+) \\ J_+ K}\right).
\ee
$\bar{\bt}$ transforms under the coadjoint action of $\cU_{q^{-1}}(\su(2))$ as a covector,
\be \label{TransfoConjugate}
K\blacktriangleright_{q^{-1}} \bar{t}_A= q^{\f{A}{2}} \bar{t}_A= \bar{S}(K) \bar{t}_A K, \qquad J_{\pm} \blacktriangleright_{q^{-1}} \bar{t}_A=\sqrt{[1\pm A][1\mp A +1}\, \bar{t}_{A\mp1}= \bar{S}(K) \bar{t}_A J_{\pm} + \bar{S}(J_{\pm}) \bar{t}_A K^{-1}.
\ee
This action comes from the coproduct for $\cU_{q^{-1}}(\su(2))$, $\bar{\Delta}= \psi \circ \Delta = \cR \Delta \cR^{-1}$. $\bar{S}= S^{-1}$ defines the  Hopf algebra antipode operator for $\cU_{q^{-1}}(\su(2))$. And the relationship between $\bar{\bt}$ and $\bt^{\operatorname{op}}$ is simply given by
\be \label{conjugateAndtop}
t_A^{\operatorname{op}}= \bar{t}_{-A} \textrm{ with } q \textrm{ replaced by } q\mone.
\ee
This identification between $\bar{\bt}$ and $\bt^{\operatorname{op}}$ will be important to defined the Wigner matrices entering in the definition of the Hamiltonian constraint in the next section.

\medskip

Writing down the matrix elements of the relations \eqref{EquivarianceTop}, some recursions on the matrix elements of $t^{\operatorname{op}}_A$ are obtained. They lead to a Wigner-Eckart theorem adapted to the coadjoint action \eqref{EquivarianceTop},
\be \label{WignerEckartTop}
\langle jn|t^{\operatorname{op}}_A|jm\rangle = N_j\ C^{1\ \, j\ \ j}_{A -n -m},\qquad \text{with $N_j = \sqrt{\frac{[2j][2j+2]}{[2]}}$}.
\ee
The matrix elements of $\bt^{\operatorname{op}}$ are simply related to the matrix elements of $\bt^\dagger$, the adjoint of $\bt$. Indeed, using \eqref{OperatorDualization},
\be \label{topDual}
\langle jn|t^{\dagger}_A|jm\rangle =N_j (-1)^{n-m} q^{\f{n-m}{2}} C^{1 \; j \; j}_{A\,-n \, -m}= (-1)^A q^{\f{A}{2}}\langle jn|t^{\operatorname{op}}_A|jm\rangle.
\ee

\medskip

The operator $\bt^{\operatorname{op}}$ is a set of  operators, each of them being represented as a square matrix.  One can generalize this to have the components represented by  rectangular matrices. To this aim, let us introduce $\tau^{\operatorname{op}}$ which satisfies
\be \label{EquivarianceTop bis}
K\blacktriangleright\,\tau^{\operatorname{op}}_A = q^{-\frac{A}{2}}\,t^{\operatorname{op}}_A = S(K) \tau^{\operatorname{op}}_A K,\qquad J_\pm \blacktriangleright\, \tau^{\operatorname{op}}_A = \sqrt{[1\mp A][1\pm A+1]}\,\tau^{\operatorname{op}}_{A\pm1} = S(J_\pm) \tau^{\operatorname{op}}_A K + S(K^{-1}) \tau^{\operatorname{op}}_A J_\pm,
\ee
and whose matrix elements are
\be \label{WignerEckartTopgeneral}
\langle Jn|\tau^{\operatorname{op}}_A|jm\rangle = N_{jJ}\ C^{1\ \, J\ \ j}_{A -n -m}.
\ee
The triangular constraints of the CG coefficients impose $J=j-1,j,j+1$. When $J=j$, we recover $\bt^{\operatorname{op}}$  with $N_{jJ} \delta_{jJ}=N_j$. Moreover, $\tau^{\operatorname{op}}$ is related to the conjugate of $\tau$ by $\tau^{\operatorname{op}}_m= \bar{\tau}_{-m}$ where we replace $q$ by  $q^{-1}$ in the definition of $\bar{\tau}$.

\medskip

Up to now, we have considered the case of a single Hilbert space $\cH_j$ on which the general vector operator $\tau$ and the operator $\tau^{\operatorname{op}}$ are represented. In the following, we shall need to consider the generalization of such  operators to the tensor product of these Hilbert spaces $\bigotimes_i {j_i}$. In particular we will need to consider the tensor product of vector operators. Due to the non-cocommutativity of the coproduct, the construction of a vector operator that would act on  ${j_2}$ in ${j_1}\ot {j_2}$ is non trivial. The key tool to construct an object that transforms as a vector  is the deformed permutation $\psir$. We know by construction that $\tau\ot \one$ is a vector operator, then we can permute $\tau$ using the deformed permutation.
\be \label{r}
\,^{(1)}\tau = \tau\ot \one, \quad \,^{(2)}\tau\equiv \psir \circ \,^{(1)}\tau \circ  \psir\mone = \cR_{21}(\one\ot \tau)\cR_{21}\mone.
\ee
Thanks to the braided permutation, $\,^{(2)}\tau$ is still a vector operator under the adjoint action of $\UQ$.
The construction is extended to a general tensor product of Hilbert spaces.
\be
\,^{(i)}\tau\equiv \cR_{i\, i-1}\dotsm\cR_{i1}(\one\ot\dotsb\ot\one\ot \tau)\cR_{i1}\mone \dotsm\cR_{i\, i-1}\mone, 
\ee
where we used the standard notation $\cR_{12}= \cR_1\ot \cR_2$, and $\cR_{13}= \cR_1\ot \one \ot \cR_3$ and so on. 
For a recent review of this construction, see \cite{ours2}. Let us recall here the explicit expressions of $\,^{(1)}\bt $ and $\,^{(2)}\bt $ \cite{ours2} that we shall use in the next section. The components of $\,^{(1)}\bt $ are trivially obtained,
\bes
 \,^{(1)}t_+&=& t_+ \otimes \one= KJ_+\otimes \one, \,\,
    \,^{(1)}t_0=t_0 \otimes \one = -\frac{1}{\sqrt{[2]}}\,(q^{-\frac12}J_+J_- - q^{\frac12}J_-J_+)\otimes \one, \,\,
  \,^{(1)}t_-=t_- \otimes \one = -KJ_-\otimes \one. \nn \ees
 The components   $\,^{(2)}t_A=\cR_{21}(\one \otimes t_A)\cR_{21}^{-1}$ are more complicated:
  \bes
&& \,^{(2)}t_+= 
 K^2 \otimes K J_+ , \quad
  \,^{(2)}t_0 
= -\f{1}{\sqrt{[2]}}\left[\one\otimes(q^{-\f12}J_+J_--q^{\f12}J_-J+) - (q^{\f12}-q^{-\f12})(q^{\f12}+q^{-\f12}) K J_- \otimes KJ_+  \right],\nn \\
&&  \,^{(2)}t_-=
- K^{-2} \otimes KJ_- - q^{-\f12} (q^{\f12} -q^{-\f12})K^{-1}J_- \otimes(q^{-\f12} J_+J_--q^{\f12} J_-J_+) +q^{-\f12}(q^{\f12}-q^{-\f12})^2 (J_-)^2 \otimes K J_+.\label{t2}
\ees

\subsection{Scalar operators} \label{sec:ScalarProd}

The (classical) scalar product provides invariant quantities from vectors. The quantization of scalar products is therefore expected to give rise to scalar operators, i.e. operators which are invariant under $\UQ$. A scalar operator $\cO$ on $\cH_v=\bigotimes_{i=1}^n {j_i}$ must satisfy
\be
J_\pm \act \cO = (\Delta^{(n-1)}J_\pm) \cO(\cop^{(n-1)} K\mone) - q^{\pm\demi}(\cop^{(n-1)} K\mone )\cO (\cop^{(n-1)} J_\pm)= 0, \quad K \act \cO =(\cop^{(n-1)} K) \cO  (\cop^{(n-1)} K\mone)= \cO.
\ee
As such, the  operator $\cO$ commute with the quantum Gauss constraint. To see this, we just need the following lemma.
\begin{lemma}
Let us consider the scalar operator $\cO$ acting on $\bigotimes_{i=1}^n {j_i}$, then  $\cO$ commutes with $\cop^{(n)} K^{\pm1}$ and $\cop^{(n)} J_\pm$.
\end{lemma}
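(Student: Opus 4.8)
The plan is to phrase everything in terms of the ``total'' generators acting on the $n$-fold tensor product, which I abbreviate $\mathcal{K}\equiv\cop^{(n-1)}K$ and $\mathcal{J}_\pm\equiv\cop^{(n-1)}J_\pm$ (these are the operators the lemma refers to). The one structural input is that $\cop$ is an algebra homomorphism and is coassociative, so the iterated coproduct $\cop^{(n-1)}:\UQ\dr\UQ^{\ot n}$ is again an algebra homomorphism; consequently $\mathcal{K},\mathcal{J}_\pm$ obey the same defining relations as $K,J_\pm$, in particular
\be
\mathcal{K}\,\mathcal{J}_\pm\,\mathcal{K}\mone = q^{\pm\demi}\,\mathcal{J}_\pm,
\qquad
\mathcal{J}_\pm\,\mathcal{K}\mone = q^{\pm\demi}\,\mathcal{K}\mone\,\mathcal{J}_\pm .
\ee
In this notation the two conditions defining a scalar operator become $\mathcal{K}\,\cO\,\mathcal{K}\mone=\cO$ and $\mathcal{J}_\pm\,\cO\,\mathcal{K}\mone - q^{\pm\demi}\,\mathcal{K}\mone\,\cO\,\mathcal{J}_\pm=0$.

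The $K$ part is then immediate: the first condition $\mathcal{K}\,\cO\,\mathcal{K}\mone=\cO$ is literally $[\cO,\mathcal{K}]=0$, and conjugating by $\mathcal{K}\mone$ (or simply inverting) gives $[\cO,\mathcal{K}\mone]=0$ as well. For the $J_\pm$ part I would use this commutation to move $\cO$ past $\mathcal{K}\mone$ in the second condition, rewriting $\mathcal{J}_\pm\,\cO\,\mathcal{K}\mone=\mathcal{J}_\pm\,\mathcal{K}\mone\,\cO$, and then apply the $q$-commutation relation above to get $\mathcal{J}_\pm\,\mathcal{K}\mone\,\cO=q^{\pm\demi}\,\mathcal{K}\mone\,\mathcal{J}_\pm\,\cO$. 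The scalar condition thus reads $q^{\pm\demi}\,\mathcal{K}\mone\,(\mathcal{J}_\pm\,\cO-\cO\,\mathcal{J}_\pm)=0$, and since $q^{\pm\demi}\,\mathcal{K}\mone$ is invertible this forces $[\cO,\mathcal{J}_\pm]=0$, completing the proof.

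There is essentially no obstacle beyond the $q$-bookkeeping, and the single point to watch is that the factor $q^{\pm\demi}$ appearing in the defining condition — which originates from the antipode $S(J_\pm)=-q^{\pm\demi}J_\pm$ entering $J_\pm\act\cO$ — is exactly the same $q^{\pm\demi}$ produced when commuting $\mathcal{J}_\pm$ through $\mathcal{K}\mone$, so the two match and the deformed ``covariance'' condition collapses to an ordinary commutator. I would also flag coassociativity explicitly, since it is what guarantees that $\mathcal{K}$ and $\mathcal{J}_\pm$ genuinely close the $\UQ$ relations on $\bigotimes_i j_i$ independently of how the $n-1$ coproducts are bracketed; without it the manipulation of the total generators would not be well defined.
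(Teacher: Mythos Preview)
Your proof is correct and follows essentially the same route as the paper: first read off $[\cO,\mathcal{K}]=0$ from the $K$-invariance condition, then use that commutation together with the relation $\mathcal{K}\mathcal{J}_\pm\mathcal{K}^{-1}=q^{\pm\demi}\mathcal{J}_\pm$ to collapse the $J_\pm$-invariance condition into $[\cO,\mathcal{J}_\pm]=0$. The only cosmetic difference is that the paper multiplies both scalar conditions on the right by $\cop^{(n-1)}K$, whereas you slide $\cO$ past $\mathcal{K}^{-1}$ on the left; your version is slightly more explicit in invoking the algebra-homomorphism property of $\cop^{(n-1)}$ to justify that $\mathcal{K},\mathcal{J}_\pm$ obey the $\UQ$ relations, which the paper uses silently.
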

\begin{proof}
Consider the scalar operator $\cO$ living on $\bigotimes_{i=1}^n {j_i}$. By definition we have that
$(\cop^{(n-1)} K) \cO  (\cop^{(n-1)} K\mone)= \cO.$ Multiplying each side by $\cop^{(n-1)} K$ on the right, we see that $\cO$ commutes with $\cop^{(n-1)} K$. The same argument goes for $\cop^{(n-1)} K\mone $. Consider now  $(\Delta^{(n-1)}J_\pm) \cO(\cop^{(n-1)} K\mone) - q^{\pm\demi}(\cop^{(n-1)} K\mone )\cO (\cop^{(n-1)} J_\pm)= 0$ and once again multiply by $\cop^{(n-1)} K$ on the right. We use then that $\cop^{(n-1)} K^{-1}$ commutes with $\cO$ and that $KJ_\pm K^{-1} = q^{\pm\demi} J_\pm$.
\end{proof}
We see therefore that $\cO$ has to commute with each term in the Gauss constraint \eqref{ngauss}, as expected. 

\medskip

The notion of scalar product for two vectors in the spherical basis is naturally provided by the CG coefficient $C^{1\ 1\ 0}_{m_1  m_2  0}$. Given two vector operators $\,^{(i)}\tau$ and $\,^{(j)}\tau^\prime$ acting on the Hilbert space $\bigotimes_{k=1}^n j_k$, we project their product on its scalar part using the CG coefficients,
\be \label{QuantumScalarProd}
\,^{(i)}\tau\cdot \,^{(j)}\tau^\prime\equiv\sqrt{[3]}C^{1\ 1\ 0}_{m_1  m_2  0}\,^{(i)}{ \tau}_{m}\, \,^{(j)} {\tau^\prime}_{m_2}= \sum_m (-1)^{1-m}q^{\f{m}{2}} \,^{(i)}{ \tau}_{m}\,  \,^{(j)}{\tau^\prime}_{-m} \ee
When the vectors both act on the same Hilbert space $j_i$, the formula reduces quite dramatically. Indeed, following the previous lemma, this quantum scalar product is a scalar operator. Hence it commutes with the generators of $\UQ$ and with the $\cR$-matrix. The many $\cR$-matrices appearing in the definition of $^{(i)}\tau$ and $^{(i)}\tau^\prime$ thus cancel out,
\be
\,^{(i)}\tau\cdot \,^{(i)}\tau^\prime = \one\ot \dotsb\ot \tau\cdot \tau\ot\one\ot\dotsb\ot\one = \sum_m  \one\ot \dotsb\ot\tau^*_m \tau^\prime_m\ot\one\ot\dotsb\ot\one= \sum_m  \one\ot \dotsb\ot \tau_m \bar{\tau^\prime}_m\ot\dotsb\ot\one.
\ee
In particular the norm of $\tau$ is given by
\be
\tau\cdot \tau | j m \rangle= -N_j^2|j m \rangle=- \f{[2j][2j+1]}{[2]}|j m \rangle, 
\ee
and the norm of $\tau^{\operatorname{op}}$,
\be
\tau^{\operatorname{op}}\cdot \tau^{\operatorname{op}} | j m \rangle= -N_j^2 |j m \rangle =- \f{[2j][2j+1]}{[2]}|j m \rangle. 
\ee

\subsubsection{Scalar products between two edges incoming at a vertex}

As it has already been emphasized in the Section \ref{sec:Classical}, when the edges 1 and 2 incident on $v$ are oriented inwards, the momentum $\ell_1$ transforms as $v\ell_1 v'^{-1}$ under $\SU(2)$, where $v'$ is defined like in the equation \eqref{SU2onEll}. However, the momentum $\ell_2$ then transforms with the braided rotation $v'$ instead of $v$, i.e. as $v'\ell_2 v''^{-1}$ (where $v''$ is defined through the Iwasawa decomposition $v'\ell_2 = \ell_2' v''$).

This implies that the scalar product $\vec{T}_1^{\operatorname{op}}\cdot \vec{T}_2$ is $\SU(2)$ invariant. This quantity is interesting for geometric reasons \cite{HyperbolicPhaseSpace}. When the Gauss law is satisfied on a 3-valent vertex, for instance $\ell_1 \ell_2 \tilde{\ell}^{-1}_3=\one$, this scalar product corresponds to the cosine of (hyperbolic) angle between the two edges dual to the edges 1 and 2. Indeed, the Gauss law written in terms of $\SU(2)$ invariant quantities translates into the hyperbolic cosine laws on the triangle dual to the ribbon vertex, which enables the identification of geometric observables (like the angle between two edges) in terms of scalar products. Not only the angle between two edges appears in the hyperbolic cosine law, but it is also enters the Hamiltonian constraint, as the latter relates the extrinsic curvature to the intrinsic geometry. This motivates the quantization of $\vec{T}_1^{\operatorname{op}}\cdot \vec{T}_2$, which we thus expect to give rise to a scalar operator.

A direct quantization of  $\vec{T}_1^{\operatorname{op}}\cdot \vec{T}_2$ using the quantum operators \eqref{t comp}, \eqref{top comp} and the definition of the quantum scalar product \eqref{QuantumScalarProd}
cannot lead to a quantum scalar operator. Indeed, $\,^{(1)}\bt^{\operatorname{op}}$ (defined in \eqref{top comp}) transforms as a vector under the \textit{coadjoint action}, whereas  $\,^{(2)}\bt$ transforms as a vector under the \textit{adjoint action}. Hence, there is no sense in contracting their product using the CG coefficients $C^{1 1 0}$ to project on the invariant component.

Instead, it was shown in  \cite{ours1, ours2} how $\,^{(1)}\bt\cdot \,^{(2)}\bt$  acting on the 3-valent intertwiner $i_{j_1,j_2j_3^*}$ can be interpreted as the quantum version of the hyperbolic cosine law. This comes from the explicit calculation,
\be \label{QuantumAngle}
\bigl(\,^{(1)}\bt\cdot \,^{(2)}\bt\bigr) \, i_{j_1j_2j_3^*} = qN_{j_1} N_{j_2} \sqrt{[2j_1+1][2j_2+1]} (-1)^{j_1+j_2+j_3+1} \begin{Bmatrix} j_1 &j_1 &1\\ j_2 &j_2 &j_3\end{Bmatrix}\ i_{j_1j_2j_3^*}.
\ee
and the explicit value of the quantum 6j-symbol which appear in the above eignevalue. This suggests that a good definition of the quantum version of $\vec{T}_1^{\operatorname{op}}\cdot \vec{T}_2$, involving $\bt$ and $t^{\operatorname{op}}$, should be related to $\,^{(1)}\bt\cdot \,^{(2)}\bt$. As a matter of fact, using the explicit values of $\,^{(1)}\bt$ and $\,^{(2)}\bt $ given in \eqref{t2}, one can arrive at
\be\label{cool}
\,^{(1)}\bt\cdot \,^{(2)}\bt= q \sum_A\, t^{\operatorname{op}}_{-A}\ot t_{A},
\ee
and this is an equality as \textit{operators}. This is a truly remarkable result, because classically, the scalar product expands as
\be
\begin{aligned}
\vec{T}^{\operatorname{op}}_1\cdot \vec{T}_2 &= -2 \left((\tr\ell^\dagger_1\ell_1 \frac{\sigma_z}{\sqrt{2}}) (-\tr\ell_2\ell^\dagger_2 \frac{\sigma_z}{\sqrt{2}}) + (\tr\ell^\dagger_1\ell_1 \sigma_-) (-\tr\ell_2\ell^\dagger_2\sigma_+) + (-\tr\ell^\dagger_1\ell_1 \sigma_+) (\tr\ell_2\ell^\dagger_2\sigma_-)\right)\\
&= -2 \sum_{A=-1,0,1} \bigl(T^{\operatorname{op}}_1\bigr)_{-A}\, \bigl(T_2\bigr)_A,
\end{aligned}
\ee
where the spherical components of $\vec{T}, \vec{T}^{\operatorname{op}}$ are defined in the Equations \eqref{TComponents}, \eqref{TopComponents}. Thus, the Equation \eqref{cool} shows that it is possible to directly promote the components of $\vec{T}, \vec{T}^{\operatorname{op}}$ to their quantum versions \eqref{t comp}, \eqref{top comp}, while keeping the exact same formula as in the classical scalar product,
\be
\vec{T}^{\operatorname{op}}_1\cdot \vec{T}_2\quad \underset{\text{quantization}}{\rightarrow} \quad \sum_{A=-1,0,1} \bigl(t^{\operatorname{op}}_1\bigr)_{-A}\otimes \bigl(t_2\bigr)_A.
\ee
This leads to a scalar operator (by definition of the left hand side of \eqref{cool}), with the nice feature of having a simple and explicit factorization on ${j_1}\otimes {j_2}$. Moreover, this makes the link between the formalism developed in \cite{ours1, ours2} and the quantization of the lattice model of \cite{HyperbolicPhaseSpace}. (It can also be checked directly that the right hand side of \eqref{cool} commutes with the generators of $\UQ$.)


Strikingly,  the $\cR$-matrices entering in the left hand side of \eqref{cool} through the definition of $\,^{(2)}\bt$ are not explicitly visible in the right hand side. Recalling from \cite{HyperbolicPhaseSpace} that $\vec T^{\operatorname{op}}_1= h_1 \act \vec T_1$ where $h_1$ is a rotation entering the Cartan decomposition of $\ell_1$, and quantizing $\vec T_1$ as the vector operator $\,^{(1)}\bt$, it seems likely that the quantization of $h_1$ is somehow related to the $\cR$-matrix. While this should be further investigated, this is not necessary on a first approach to the quantization of neither the Gauss law, nor the Hamiltonian, and will be studied elsewhere.

\subsubsection{Scalar products between two edges outgoing at a vertex}

We now deal with the case where the edges 1 and 2 are outward at $v$. The Gauss law becomes $\tell^{-1}_1\tell^{-1}_2 \dotsm =\one$ and therefore it is necessary to introduce the vectors based on $\tell$,
\be
\vec{\widetilde{T}} = \tr \tell \tell^\dagger \vec{\sigma},\qquad \vec{\widetilde{T}}^{\operatorname{op}} = \tr \tell^\dagger \tell \vec{\sigma}.
\ee
It is easy to check that $\vec{\widetilde{T}}_1\cdot \vec{\widetilde{T}}^{\operatorname{op}}_2$ is $\SU(2)$ invariant. Following the lesson of the previous section, a straightforward candidate to its quantization is $\sum_A\, \left(t_{-A}\ot t^{\operatorname{op}}_{A}\right)$. On the other hand we know that a well-defined scalar operator on $\Inv(j_1^*\otimes j_2^*\otimes \dotsb)$ is obtained upon dualizing the quantum scalar product \eqref{cool} with respect to 1 and 2. 
Applying the definition of the adjoint operator \eqref{OperatorDualization} to $t^{\operatorname{op}}$ and using \eqref{topDual} for the definition of $t^\dagger$, we can actually match the two proposals, since
\be \label{toptAdjoint}
\sum_A \, \left(t^{\operatorname{op} \dagger}_{-A}\ot  t^\dagger_{A} \right)= \sum_A\, \left(t_{-A}\ot t^{\operatorname{op}} _{A}\right).
\ee

We will henceforth use the quantization map
\be
\vec{\widetilde{T}}_1\cdot \vec{\widetilde{T}}^{\operatorname{op}}_2 \qquad \rightarrow \qquad \sum_{A=-1,0,+1} (t_1)_{-A}\otimes (t_2^{\operatorname{op}})_{A}
\ee
This invariant operator is diagonal on the spaces $\Inv(j_1^*\ot j_2^* \ot j_3^*)$. Let us find its eigenvalues,
\begin{multline}
 \sum_A\, \left(t_{-A}\ot t^{\operatorname{op}} _{A}\right)\ i_{j_1^* j_2^* j_3^*} \\
= N_{j_1} N_{j_2} \sum C^{\,1\ j_1\ j_1}_{-A n_1 m_1} C^{1\,j_2\ j_2}_{A -m_2 -n_2} (-1)^{j_1-m_1+j_2-m_2} q^{-\frac{m_1+m_2}{2}} C^{\ \,j_1\ \,j_2\ \,j_3}_{-m_1 -m_2 m_3} \langle j_1 n_1|\otimes \langle j_2 n_2| \otimes \langle j_3 m_3|
\end{multline}
We focus on the sum over $A,m_1,m_2$,
\be
\begin{aligned}
&\sum_{A,m_1,m_2} C^{\,1\ j_1\ j_1}_{-A n_1 m_1} C^{1\,j_2\ j_2}_{A -m_2 -n_2} (-1)^{j_1-m_1+j_2-m_2} q^{-\frac{m_1+m_2}{2}} C^{\ \,j_1\ \,j_2\ \,j_3}_{-m_1 -m_2 m_3} \\
&= -\sum_{A,m_1,m_2} C^{\,j_1\ 1\ \,j_1}_{-n_1 A -m_1} (-1)^{1-A}q^{\frac{A}{2}} C^{\,1\ \,j_2\ \,j_2}_{-A -n_2 -m_2} (-1)^{j_1+j_2+m_3} q^{\frac{m_3}{2}} C^{\ \,j_1\ \,j_2\ \,j_3}_{-m_1 -m_2 m_3}\\
&= \sqrt{[2j_1+1][2j_2+1]}(-1)^{j_1+j_2+j_3+1} \begin{Bmatrix} j_1 &1 &j_1\\ j_2 &j_3 &j_2\end{Bmatrix} (-1)^{j_1-n_1+j_2-n_2} q^{-\frac{n_1+n_2}{2}}  C^{\ \,j_1\ \,j_2\ \,j_3}_{-n_1 -n_2 m_3}.
\end{aligned}
\ee
This leads to
\be \label{t1*t2*}
 \sum_A\, \left(t_{-A}\ot t^{\operatorname{op}} _{A}\right)\ i_{j_1^* j_2^* j_3^*} = N_{j_1} N_{j_2} \sqrt{[2j_1+1][2j_2+1]}(-1)^{j_1+j_2+j_3+1} \begin{Bmatrix} j_1 &j_1 &1\\ j_2 &j_2 &j_3\end{Bmatrix}\ i_{j_1^* j_2^* j_3^*}
\ee
This is remarkably the same result as \eqref{QuantumAngle}, when the two edges 1,2 are incoming at the vertex.

\subsubsection{Scalar products between one incoming and one outgoing edge}

It is also possible to build invariant operators when only one edge orientation is flipped at the vertex. For instance when the edge 2 is outgoing while the edge 1 is oriented inward, the quantized vertex is given by an intertwiner of the form $i_{j_1 j_2^* j_3^*} \in \textrm{Inv}(j_1\otimes j_2^* \otimes j_3^*)$. The invariant operator acting in $\textrm{Inv}(j_1\otimes j_2^* \otimes j_3^*)$ can once again be built from $\sum_A \left( t^{\operatorname{op}}_{-A}\ot  t_{A}\right)$ and by taking the adjoint of $\bt$ given by \eqref{topDual}   to dualize the operator acting on the second leg of the intertwiner. We obtain that
\be
\sum_A \left( t^{\operatorname{op}}_{-A}\ot  t_{A}^\dagger\right) \,i_{j_1 j_2^* j_3^*} = \sum_A (-1)^{-A} q^{-\f{A}{2}} \left( t^{\operatorname{op}}_{-A}\ot  t^{\operatorname{op}}_{A}\right)\, i_{j_1 j_2^* j_3^*} = -\sum_{A,B} \sqrt{[3]} C^{1\,1\,0}_{B A 0}\ (t^{\operatorname{op}}_{A}\otimes t^{\operatorname{op}}_{B})\, i_{j_1 j_2^* j_3^*} ,
\ee
where $\sqrt{[3]} C^{1\,1\,0}_{B A 0} = \delta_{B,-A} (-1)^{1-B} q^{\frac{B}{2}}$. At the classical level, the invariant vector is given by $\vec{T}^{\operatorname{op}}_1\cdot \vec{\widetilde{T}}^{\operatorname{op}}_2$. Therefore, we find that the quantization  map that is required is more subtle,
\be
\vec{T}^{\operatorname{op}}_1\cdot \vec{\widetilde{T}}^{\operatorname{op}}_2 \qquad \rightarrow \qquad -\sum_{A,B} \sqrt{[3]} C^{1\,1\,0}_{B A 0}\ (t^{\operatorname{op}}_{A}\otimes t^{\operatorname{op}}_{B}).
\ee
The last case is when the edge 1 is outgoing and the edge 2 incoming. The corresponding  invariant operator acting in $\textrm{Inv}(j_1^*\otimes j_2 \otimes j_3^*)$ is then defined by
\be
\sum_A \left( t^{op \dagger}_{-A}\ot  t_{A}\right) \,i_{j_1 j_2^* j_3^*} = \sum_A (-1)^{A} q^{\f{A}{2}} \left( t_{-A}\ot  t_{A}\right)\, i_{j_1 j_2^* j_3^*} = -\sum_{A,B} \sqrt{[3]} C^{1\,1\,0}_{B A 0}\ (t_{A}\otimes t_{B})\,i_{j_1 j_2^* j_3^*},
\ee
Therefore, we recover the same quantization map as in the previous case, since the classical invariant in this case is given by $\vec{\widetilde{T}}_1\cdot \vec{T}_2 $. That is,
\be
\vec{\widetilde{T}}_1\cdot \vec{T}_2 \qquad \rightarrow \qquad \sum_{A,B=-1,0,+1}- \sqrt{[3]} C^{1\,1\,0}_{B A 0}\ (t_{A}\otimes t_{B}).
\ee
Note that in the quantization scheme proposed above, we have focused on operators acting on the first two legs of the intertwiner. Extending the construction to operators also acting  on the third leg is more involved and a better understanding of the use of the $\cR$-matrix together with our quantization scheme is required. Fortunately, we do not need this for the next steps, hence we postpone this issue for ulterior investigations.

\section{Quantization of the holonomies}

In order to study the Hamiltonian constraint (or ``flatness'' constraint $\mathcal{C}_f$), we need to quantize the holonomies $u, \tu$. To restrict the present article to a reasonable size, we will only give the ingredients required to quantize the Hamiltonian on a particular face of degree three (done in the Section \ref{sec:Hamiltonian}), and we postpone a complete study to further reports.

Let us first recall some results about the Wigner matrices for the quantum group $\SU_q(2)$ in the vector representation. The classical matrix element $R^1_{AB}(u)$ of $u\in \SU(2)$ is quantized as the operator given by the $\SU_q(2)$ Wigner matrix element $\,^q\cD^1_{AB}(u)$, $u\in\SU_q(2)$, in the vector representation of $\SU_q(2)$. 
To determine the action of this operator on the relevant space Hilbert space,  we use the  \emph{Wigner product law} \cite{biedenharn}. In the following, unless specified otherwise, we have $u\in\SU_q(2)$.
\bes \label{WignerProductLaw}
\,^q\cD^1_{BA}(u) \,^q\cD^{j_1}_{n_1 m_1}(u) &= &\,^q\cD^1_{BA}(u) \la j_1 n_1 | u | j_1 m_1\ra  =\sum_{\substack{J_1 = j_1-1,j_1,j_1+1 \\ -J_1 \leq M_1,N_1,\leq J_1}} C^{1 j_1 J_1}_{B n_1 N_1} \,^q\cD^{J_1}_{N_1 M_1}(u)C^{1 j_1 J_1}_{A m_1 M_1} \nn\\
&=&
\sum_{\substack{J_1 = j_1-1,j_1,j_1+1 \\ -J_1 \leq M_1,N_1,\leq J_1}} C^{1 j_1 J_1}_{B n_1 N_1} \la J_1 N_1 | u | J_1 M_1\ra C^{1 j_1 J_1}_{A m_1 M_1}.
\ees
As such the Wigner matrix element is a map $ j_1^*\otimes j_1 \to \bigoplus_{J_1=j_1,j_1\pm1} J_1^*\otimes J_1$,
\be \label{HolonomyAction}
\,^{q}\cD^{(1)}_{BA}(u)\ \langle j_1 n_1| \otimes |j_1 m_1\rangle = \sum_{J_1, M_1, N_1}  C^{1 j_1 J_1}_{B n_1 N_1} C^{1  j_1 J_1}_{A m_1 M_1} \langle J_1 N_1|\otimes | J_1 M_1\rangle.
\ee

In \eqref{TransfoVectorOperator}, we have defined tensor operators as objects ``transforming well'' under $\UQ$. This property is naturally extended\footnote{This transformation law is the $q$-analogue of the equivariance transformation property for a $\SU(2)$ tensor operator $\bt^j$ with components $\bt^j_m$ where the action by $g \in \SU(2)$ obeys $U(g) \bt^j_m U^{-1}(g)=\sum_{m^\prime} T^j_{m^\prime} R^j_{m^\prime m} (g)$ with $R^j(g)$ the $j$ representation of $\SU(2)$.} to $\SU_q(2)$ \cite{biedenharn}, in which case we use  the Wigner matrix elements $\,^{q}\cD^j_{AB}(u)$. Given a  vector operator $\bt$, we can construct the new vector operator  $\bt^\prime$ using the Wigner matrix element
\be\label{TransfoT}
t_A \longrightarrow t_A^\prime =\sum_B t_B \,^{q}\cD^{1}_{BA}(u).
\ee
Writing the Wigner matrix elements as $\,^{q}\cD^{1}_{BA}(u)\equiv \sum_{J_1} D_B^* \ot D_A: j_1^* \otimes j_1 \longrightarrow \bigoplus_{J_1=j_1,j_1\pm1} J_1^*\otimes J_1$, and looking at the left action of $\SU_q(2)$ on this quantum matrix, it can be shown \cite{biedenharn}, using the group multiplication law\footnote{The group multiplication law is given by, $\,^{q}\cD(u) \,^{q}\cD(\tilde{u})=\,^{q}\cD(u\tilde{u}) \; \forall \, u, \, \tilde{u} \in \SU_q(2)$}, that the columns of $\,^{q}\cD^{1}(u)$, $D_A$, transform according to \eqref{TransfoT}. Therefore, $D_A$ is proportional to a vector operator $\tau_A$ (the vector whose components are rectangular matrices, \eqref{WignerEckartTgeneral}), and consequently, $D_B^*$ is proportional to the conjugate vector operator $\bar{\tau}_B$ (the bar operation is defined in \eqref{conjugate}) in order to have \eqref{TransfoT}. The action of $\,^{q}\cD^{1}_{BA}(u):\, j_1^*\otimes j_1 \to \bigoplus_{J_1=j_1,j_1\pm1} J_1^*\otimes J_1$ is thus given, up to proportionality factor $F(j_1, J_1)$, by
\bes  \nn
\,^{q}\cD^{1}_{BA}(u) \langle j_1 n_1| \otimes |j_1 m_1\rangle & =& \sum_{J_1, N_1, M_1}  F(j_1, J_1) \, \la j_1 n_1 |\bar{\tau}_B| J_1 N_1\ra \la J_1 M_1 | \tau_A | j_1 m_1 \ra \la J_1 N_1 | \otimes |J_1 M_1 \ra  \\
&=& \sum_{\substack{J_1 = j_1-1,j_1,j_1+1 \\ -J_1 \leq M_1,N_1,\leq J_1}} F(j_1, J_1) \, N_{j_1 J_1} N_{J_1j_1}\,  (-1)^{1-B} q^{\f{B}{2}} \, C^{1 J_1 j_1}_{-B N_1 n_1} C^{1 j_1 J_1}_{A m_1 M_1} \langle J_1 N_1|\otimes | J_1 M_1\rangle .\label{tutu}
\ees
where we used that $\bar{\tau}_B=(-1)^{1-B}q^{\f{B}{2}} \tau_{-B}$ and that the tensor operator $\tau$ satisfies the Wigner-Eckart theorem \eqref{WignerEckartTgeneral}. The proportionality coefficient $F(j_1, J_1)$ can be determined by comparing the action of $\,^qD^1$ coming from the Wigner product law \eqref{HolonomyAction} with \eqref{tutu}.
Since $(-1)^{1-B} q^{\f{B}{2}} C^{1 J_1 j_1}_{-B N_1 n_1}= (-1)^{J_1-j_1+1} \sqrt{\f{[2j_1+1]}{[2J_1+1]}} C^{1 j_1 J_1}_{B n_1 N_1}$, we get that
\be\label{ProporCoeff}
F(j_1, J_1)= (-1)^{j_1-J_1+1}\sqrt{\f{[2J_1+1]}{[2j_1+1]}} N^{-1}_{j_1 J_1}N^{-1}_{J_1 j_1} .
\ee

\medskip

Let us now consider an edge, say with the label 1, on our cell decomposition. From the equality $\ell_1 u_1 = \tu_1 \tell_1$, it is found that $u_1$ transforms $\vec{\widetilde{T}}^{\operatorname{op}}_1$ into $\vec{T}^{\operatorname{op}}_1$,
\be\label{cholo}
R(u_1) \vec{\widetilde{T}}^{\operatorname{op}}_1 = \vec{T}^{\operatorname{op}}_1,\ \text{or in components}\ \sum_B R^{1}_{AB}(u_1) \bigl(\widetilde{T}^{\operatorname{op}}_1\bigr)_{-B} = \bigl(T^{\operatorname{op}}_1\bigr)_{-A},
\ee
where $R(u_1)$ is the rotation (in the vector representation). It is crucial that this property still holds upon quantizing $R(u_1)$, to make sure that the quantum holonomy transforms the quantum vector at the source vertex of the edge into the quantum vector at the target vertex of the edge.

We thus need to identify the Wigner matrix relating $\bt^{\operatorname{op}}$ operators. Let us recall that $\bt^{\operatorname{op}}$ is related to $\bar{\bt}$ by \eqref{conjugateAndtop}. Therefore we first extend \eqref{TransfoT} to the $\bar{\bt}$ case, following \cite{biedenharn},
\be \label{equivarianceConjugate}
\bar{t}_A \longrightarrow \bar{t}_A^\prime =\sum_B \bar{t}_B \,^{q}\cD^{\star1}_{AB}(u),
\ee
where $\,^{q}\cD^{\star1}(u)$ is the conjugate rotation matrix defined by
\be \label{conjugateMatrix}
\,^{q}\cD^{\star(1)}_{AB}(u) 
\equiv
(-1)^{A-B} q^{\f12(A-B)} \,^{q}\cD^1_{-B-A}(u).
\ee
We want to consider $\,^{q}\cD^{\star(1)}_{AB}(u)$ as a map
\be
\,^q\cD^{\star1}_{AB}(u)\equiv \sum_{J_1} D_A^\star\ot D_B^{*\star}: j_1\otimes j_1^* \to \bigoplus_{J_1=j_1,j_1\pm1} J_1\otimes J_1^*.
\ee
Comparing the definition of $\,^q\cD^{\star1}(u)$ with the definition of the conjugate tensor operator \eqref{conjugate}, it appears that $D_A^\star$ is the conjugate tensor operator under $\SU_q(2)$. More precisely, the conjugate matrix given in \eqref{conjugateMatrix} can be realized as
\be
\,^q\cD^{\star1}_{AB}(u)=\sum_{J_1} D_A^\star\ot D_B^{*\star}= \sum_{J_1} (-1)^{A-B} q^{\f12(A-B)} \, D_{-A} \otimes D_{-B}^*= \sum_{J_1} \bar{D}_A \otimes ((-1)^{1-B} q^{-\f{B}{2}}D_{-B}^*).
\ee
We see that  $\,^q\cD^{\star1}_{AB}(u)$ is proportional to $\sum_{J_1} \bar{\tau}_A\otimes ((-1)^{1-B} q^{-\f{B}{2}} \bar{\tau}_{-B})$,  where the proportionality coefficient is given by \eqref{ProporCoeff}. Proceeding to the change $q\dr q\mone$, we get the transformation law for $t^{\operatorname{op}}$.
\be
t^{\operatorname{op}}_{-A} \longrightarrow t^{\operatorname{op} \prime}_{-A}=\sum_B t^{\operatorname{op}}_{-B}\; \,^{q^{-1}}\cD^{\star1}_{AB} (u).
\ee
Consequently, $\,^{q^{-1}}\cD^{\star 1}(u)_{AB}$ can be written as $\,^{q^{-1}}\cD^{\star 1}(u)_{AB}=\sum_{J_1} D^{\operatorname{op}}_A \ot D^{\operatorname{op}*}_B$ with $D^{\operatorname{op}}_A$ proportional to $\tau^{\operatorname{op}}_{-A}$ and $D^{\operatorname{op}*}_B$ proportional to $(-1)^{1-B}q^{\f{B}{2}}\tau^{\operatorname{op}}_{B}$. Explicitly, we get,
\be \label{QuantumHol}
\,^{q^{-1}}\cD^{*1}_{AB}(u)=\sum_{J_1} D_A^{\operatorname{op}}\otimes D_B^{\operatorname{op}*}\ \text{with } \left\{
\begin{aligned}
&  \langle J_1 M_1|D^{\operatorname{op}}_A|j_1 m_1\rangle = (-1)^{J_1-j_1+1}\sqrt{\frac{[2J_1+1]}{[2j_1+1]}}\,C^{1 J_1 j_1}_{-A -M_1 -m_1}\\
&  \langle j_1 n_1|D_B^{\operatorname{op}*}|J_1 N_1\rangle = (-1)^{1-B} q^{\frac{B}{2}} C^{1 j_1 J_1}_{B -n_1 -N_1}
\end{aligned}\right.
\ee
where we have distributed the proportionality coefficient  \eqref{ProporCoeff} conveniently for the coming calculations.

Let us now check explicitly that the constraint \eqref{cholo} is realized at the quantum level. It comes
\bes
&&\sum_B t^{\operatorname{op}}_{-B} \; \,^{q^{-1}}\cD^{*1}_{AB}(u) |j_1 m_1\ra \otimes \la j_1 l_1|= \sum_{\substack{B, J_1\\ M_1, N_1, n_1}} \la j_1 l_1 |t^{\operatorname{op}}_{-B}| j_1 n_1\ra \la J_1 M_1 | D^{\operatorname{op}}_A|j_1 m_1 \ra \la j_1 n_1 | D^{\operatorname{op}*}_B | J_1 N_1 \ra \,|J_1 M_1\ra \otimes \la J_1 N_1 | \nn \\
&& \qquad \qquad = \sum_{B, J_1, M_1, N_1, n_1} N_{j_1} C^{1 j_1 j_1}_{-B -l_1 -n_1} (-1)^{J_1-j_1+1} \sqrt{\f{[2J_1 +1]}{[2j_1+1]}} C^{1 J_1 j_1}_{-A -M_1 -m_1} (-1)^{1-B}q^{\f{B}{2}} C^{1 j_1 J_1}_{B -n_1 -N_1} |J_1 M_1\ra \otimes \la J_1 N_1 | \nn \\
&& \qquad \qquad=  \sum_{B, J_1, M_1, N_1, n_1} N_{j_1} C^{1 j_1 j_1}_{B -n_1 -l_1} (-1)^{J_1-j_1}\sqrt{\f{[2J_1 +1]}{[2j_1+1]}} C^{1 J_1 j_1}_{-A -M_1 -m_1}  C^{1 j_1 J_1}_{B -n_1 -N_1}   |J_1 M_1\ra \otimes \la J_1 N_1 |\nn \\
&& \qquad \qquad =  \sum_{ J_1, M_1, N_1} N_{j_1} (-1)^{J_1-j_1}\sqrt{\f{[2J_1 +1]}{[2j_1+1]} }C^{1 J_1 j_1}_{-A -M_1 -m_1} \delta_{J_1 j_1} \delta_{l_1 N_1}   |J_1 N_1\ra \otimes \la J_1 N_1 | \nn \\
&&\qquad \qquad = t^{\operatorname{op}}_{-A} |j_1m_1\ra \otimes \la j_1 l_1|
\ees
which is the quantum version of \eqref{cholo}.

\section{Quantization of the flatness constraint} \label{sec:Hamiltonian}

The flatness constraint is supported on faces of the cell decomposition. In this section we will restrict attention to a face of degree 3. Although the generic case probably does not require any major changes, the generalization is not totally straightforward as it is expected to involve coefficients of $\UQ$ recoupling theory with many representations.

\subsection{The Hamiltonian at the classical level}

The situation we are going to analyze is depicted in the Figure \ref{fig:3ValentFace}. The three edges supporting the flatness constraint are labeled 1,2,6 and the vertices are labeled by the three edges incident to each of them, i.e. $v_{123}, v_{264}, v_{156}$. The classical constraint reads
\be \label{3ValentConstraint}
\tu_2^{-1}\,u_1\,u_6 = \one
\ee

\begin{figure}
\includegraphics[scale=.5]{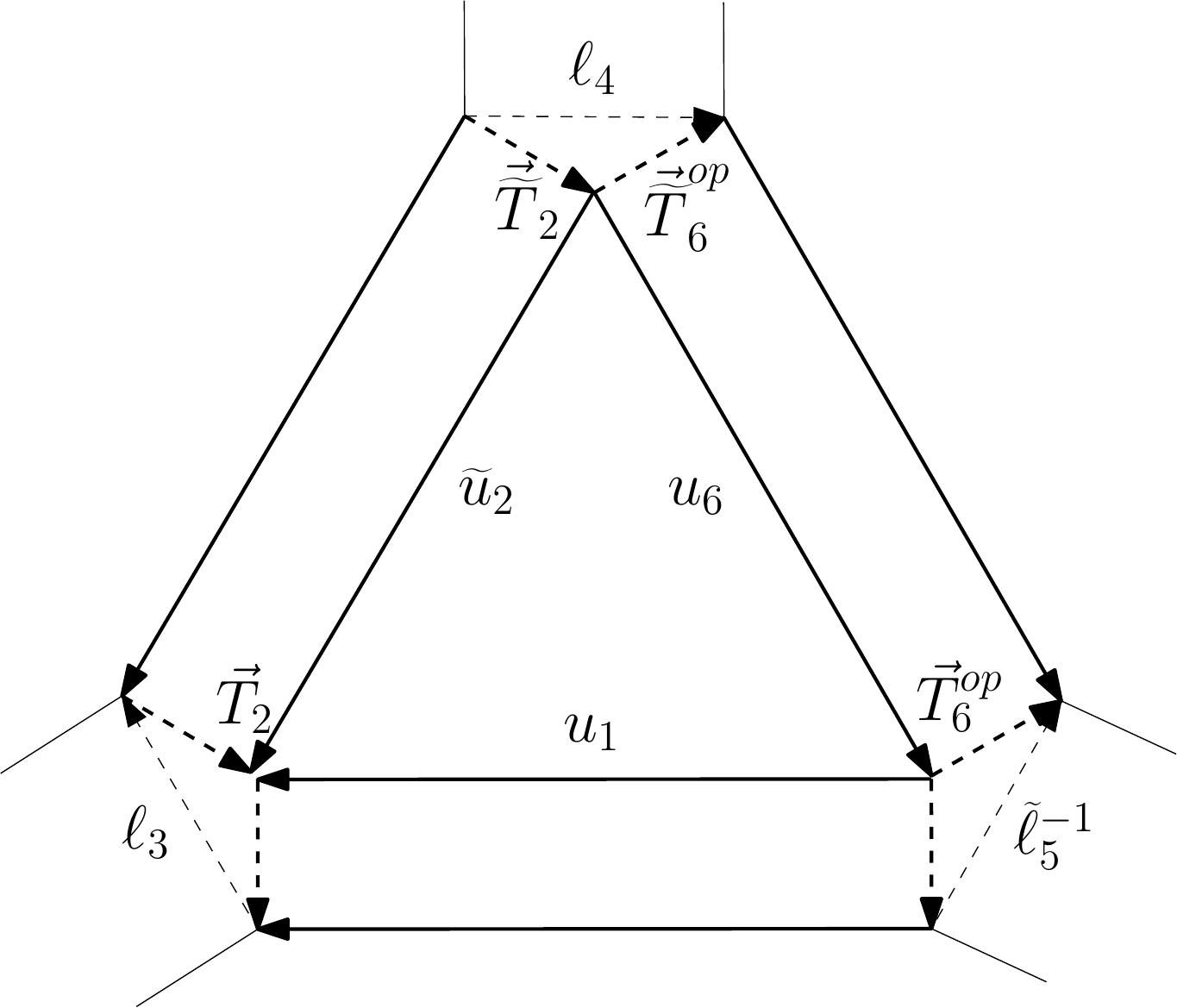}
\caption{\label{fig:3ValentFace} This is a face of degree 3 which is the support of a flatness constraint.}
\end{figure}

We recall that the scalar product $\vec{\widetilde{T}}_2\cdot \vec{\widetilde{T}}^{\operatorname{op}}_6$ is invariant and represents the cosine of the angle between the edges dual to 2 and 6 ($\vec{\widetilde{T}}$ and $\vec{\widetilde{T}}^{\operatorname{op}}$ are defined just like $\vec{T}$ and $\vec{T}^{\operatorname{op}}$ but through $\tell$ instead of $\ell$). A consequence of the constraint is that if this observable is evaluated \emph{after} parallel transport of, say, $\vec{\widetilde{T}}^{\operatorname{op}}_6$ around the face, it should not be affected. That is
\be
\vec{\widetilde{T}}_2\cdot R(\tu_2^{-1}\,u_1\,u_6) \vec{\widetilde{T}}^{\operatorname{op}}_6 - \vec{\widetilde{T}}_2\cdot \vec{\widetilde{T}}^{\operatorname{op}}_6 = 0.
\ee
This equation is labeled by a pair of edges, 2 and 6, or equivalently a pair vertex-face. We propose to define the Hamiltonian constraint this way,
\be \label{Hamiltonian}
H_{26} = \vec{\widetilde{T}}_2\cdot R(\tu_2^{-1}\,u_1\,u_6) \vec{\widetilde{T}}^{\operatorname{op}}_6 - \vec{\widetilde{T}}_2\cdot \vec{\widetilde{T}}^{\operatorname{op}}_6.
\ee
In general, whether there are some $\vec{T}, \vec{T}^{\operatorname{op}}, \vec{\widetilde{T}}$ or $\vec{\widetilde{T}}^{\operatorname{op}}$ depend on the orientations of the two edges at the vertex. This is a reasonable choice of Hamiltonian for the following reasons.
\begin{itemize}
\item The constraint \eqref{3ValentConstraint} has three real degrees of freedom, while the constraint \eqref{Hamiltonian} has only one, but there are three such constraints, one for each vertex around the face. So we get the correct number of degrees of freedom.
\item What we have actually done is rewrite the constraint \eqref{3ValentConstraint} in the vector representation of $\SU(2)$ and in a specific basis formed by the dynamical variables $(\vec{T})$. Here a possible basis is formed by $\vec{\widetilde{T}}_2, \vec{\widetilde{T}}^{\operatorname{op}}_6$ and a third vector which transforms the same way, say $R(\tu^{-1}_2) \vec{T}^{\operatorname{op}}_1$. Generally those are three independent vectors which therefore probe the three degrees of freedom of the flatness constraint\footnote{For a face of degree greater than 3, there are more than 3 such Hamiltonian constraints. However, it can be shown that only three of them are independent using the same method as in the non-deformed case in \cite{3DHamiltonian}.}. Using parallel transport (similar to the manipulation in the next paragraph), each of the three constraints  can be associated to a vertex of the face.
\end{itemize}

From the conditions $\ell_2 u_2 = \tu_2 \tell_2$ and $\ell_6 u_6 = \tu_6\tell_6$ we find that $u_6 \tell_6^\dagger \tell_6 u_6^{-1} = \ell^\dagger_6 \ell_6$ and $\tu_2 \tell_2 \tell^\dagger_2 \tu^{-1}_2 = \ell_2 \ell^\dagger_2$, or equivalently in the vector representation
\be
R(u_6)\,\vec{\widetilde{T}}^{\operatorname{op}}_6 = \vec{T}^{\operatorname{op}}_6,\qquad R(\tu_2)\,\vec{\widetilde{T}}_2 = \vec{T}_2.
\ee
This reduces our Hamiltonian constraint to
\be\label{hamil}
H_{26} = \vec{T}_2\cdot R(u_1) \vec{T}^{\operatorname{op}}_6 - \vec{\widetilde{T}}_2\cdot \vec{\widetilde{T}}^{\operatorname{op}}_6.
\ee
The geometric interpretation of the constraint $H_{26}=0$ has been extensively discussed in \cite{HyperbolicPhaseSpace} and its content is actually the same as in the non-deformed case \cite{3DHamiltonian}. To summarize, it relates the extrinsic curvature at the edge dual to 1, and captured in $u_1$, to the intrinsic curvature. From the vectors and the holonomy $u_1$ it is possible to define a notion of dihedral angle between two hyperbolic triangles. The constraint then forces this dihedral angle to be the dihedral angle of a homogeneously curved 3D geometry, determined by the angles between the triangle edges.

\subsection{Quantization: spin 1 Hamiltonian and the Biedenharn-Elliott identity}\label{BE id}

Our aim is to quantize $H_{26}$ and impose the quantum condition $\widehat{H}_{26} |\psi\rangle =0$. The Hilbert space associated to the cell decomposition is
\be \label{HilbertSpace}
\bigoplus_{\{j_e\}} \bigotimes_v \Inv (\underbrace{j^{(*)}_{e_{1v}} \otimes j^{(*)}_{e_{2v}} \otimes \dotsb}_{\text{edges incident at $v$}} )
\ee
As we have seen the Gauss law stabilizes each fixed spin sector, which allowed to solve it for fixed spins, and selected at each vertex the invariant subspace $\Inv (j^{(*)}_{e_{1v}} \otimes j^{(*)}_{e_{2v}} \otimes \dotsb)$. The notation $j^{(*)}_e$ refers to either the space with representation $j_e$ if the edge $e$ is ingoing at $v$, or the space of the dual representation $j^*_e$ if $e$ is outgoing. Notice that the order of the edges at each vertex matters, due to the non-cocommutativity of the coproduct. After solving the Gauss law at each vertex, we take the tensor product of the invariant subspaces with fixed spins, and sum over all the possible ways to label the edges with spins.

The scalar product makes intertwiners at a given vertex with different edge labels orthogonal,
\be
( i_{j_{e_{1v}},\dotsc, j_{e_{Nv}}}, i_{j'_{e_{1v}},\dotsc, j'_{e_{Nv}}} ) = N_{j_{e_{1v}},\dotsc, j_{e_{Nv}}} \prod_{e_{iv}} \delta_{j_{e_{iv}},j'_{e_{iv}}}.
\ee
Furthermore, if the tensor product of tensor operators is not generally commutative, the tensor product of tensor operators acting on a tensor product of invariant subspaces is commutative. 
For example, let us consider  the tensor operator $\bt$. We know that $\bt\ot \one$ acting on $\cH\ot \cH'$ is still a tensor operator and $\one\ot \bt$ will not be in general a tensor operator. However if we restrict $\one\ot \bt$ to the invariant space generated by the intertwiners $i\in\cH$ and $i'\in\cH'$,   then $\one\ot \bt$ can be seen as a tensor operator. Hence,  unless we consider an observable that lives on at least two different intertwiners, we do not need to order the vertices \cite{ours2}.

\medskip

The Hamiltonian \eqref{hamil} contains two types of terms which have to be quantized, $\vec{\tilde{T}}_2 \cdot \vec{\tilde{T}}^{\operatorname{op}}_6$ and $\vec{T}_2\cdot R(u_1)\vec{T}^{\operatorname{op}}_6$. Let us start with the simplest one, $\vec{\tilde{T}}_2 \cdot \vec{\tilde{T}}^{\operatorname{op}}_6$. It is well-defined on $\textrm{Inv}(j_2^* \otimes j_6^* \otimes j_4)$ and acts trivially on the factor $j_4$. The eigenvalue has already been found in the Section \ref{sec:ScalarProd}. Relabeling the variables of equation \eqref{t1*t2*} correctly,
\be \label{QuantumT2T6}
\vec{\tilde{T}}_2 \cdot \vec{\tilde{T}}^{\operatorname{op}}_6\quad\rightarrow \quad \sum_{A=-1,0,1} (t_{-A} \otimes t^{\operatorname{op}}_A) \ i_{j_2^* j_6^* j_4} = N_{j_2}N_{j_6} \sqrt{[2j_2+1][2j_6+1]} (-1)^{j_2+j_6+j_4+1} \begin{Bmatrix} 1 &j_6 &j_6\\j_4 &j_2 &j_2\end{Bmatrix}\ i_{j_2^* j_6^* j_4}.
\ee
\medskip

We are left with the quantization of $\vec{T}_2\cdot R(u_1)\vec{T}^{\operatorname{op}}_6$. From the previous sections, we know the quantum versions of both vectors as well as the holonomy. By construction we know that the operator associated to $\vec{T}_2\cdot R(u_1) \vec{T}^{\operatorname{op}}_6$ should act on the invariant space $\Inv(j_1\otimes j_2\otimes j_3)\otimes \Inv(j_6\otimes j_1^*\otimes j_5^*)$ and map it to $\bigoplus_{J_1} \Inv(J_1\otimes j_2\otimes j_3)\otimes \Inv(j_6\otimes J_1^*\otimes j_5^*)$. The holonomy part $R(u_1)$ has been quantized as $\,^{q^{-1}}\cD^{*1}_{AB}(u)$, which acts on both $j_1$ and $j_1^*$, while the respective quantum versions of $\vec T_2$ and $T^{\operatorname{op}}_6$, $\bt_2$ and $\bt^{\operatorname{op}}_6$, respectively act on $j_2$ and $j_6$. Using the decomposition $\,^{q^{-1}}\cD^{*1}_{AB}(u) = \sum_{J_1} D^{\operatorname{op}}_A\ot D^{\operatorname{op}*}_B$, obtained in  \eqref{QuantumHol}, we can construct a scalar operator by combining in adequate manner the different vector operators,
\be \label{QuantizedHolomonieHamilton}
\vec{T}_2\cdot R(u_1) \vec{T}^{\operatorname{op}}_6 = \sum_{A,B} (T_2)_{A} R^{1}_{AB}(u_1) (T_6^{\operatorname{op}})_{-B}
 \rightarrow\ \sum_{A}(D^{\operatorname{op}}_A\otimes t_{A}\otimes \one) \otimes \sum_B(t^{\operatorname{op}}_{-B} \otimes D_B^{\operatorname{op}*} \otimes \one).
\ee
The part $\sum_A D^{\operatorname{op}}_A\otimes t_{A}\otimes \one$ is well-defined as a map $\Inv(j_1\otimes j_2\otimes j_3)\to \Inv(J_1\otimes j_2\otimes j_3)$, while $\sum_B t^{\operatorname{op}}_{-B} \otimes D_B^{\operatorname{op}*} \otimes \one$ is well-defined as $\Inv(j_6\otimes j_1^*\otimes j_5^*)\to \Inv(j_6\otimes J_1^*\otimes j_5^*)$ and they both send invariant states to invariant states, by construction. It is thus possible to evaluate their action separately. We start with
\be
\begin{aligned}
\sum_A &(D^{\operatorname{op}}_A\otimes t_{A}\otimes \one)\ i_{j_1 j_2 j_3} \\
&= N_{j_2} \sum_{m_1,m_2,A} (-1)^{J_1-j_1+1}\sqrt{\frac{[2J_1+1]}{[2j_1+1]}}\,C^{1 J_1 j_1}_{-A -M_1 -m_1} C^{1 j_2 j_2}_{A m_2 n_2} C^{j_1 j_2 j_3}_{m_1 m_2 -m_3} (-1)^{j_3-m_3} q^{-\frac{m_3}{2}} |J_1 M_1, j_2 n_2, j_3 m_3\rangle\\
&= N_{j_2} (-1)^{j_1+j_2+j_3} \sqrt{[2J_1+1][2j_2+1]} \begin{Bmatrix} J_1 &j_1 &1\\ j_2 &j_2 &j_3\end{Bmatrix}\ i_{J_1 j_2 j_3},
\end{aligned}
\ee
and then
\be
\begin{aligned}
\sum_B &(t^{\operatorname{op}}_{-B}\otimes D_B^{\operatorname{op}*}\otimes \one)\ i_{j_6 j_1^* j_5^*}\\
&= N_{j_6} \sum_{m_6,n_1,B} C^{1 j_6 j_6}_{-B -n_6 -m_6} (-1)^{1-B} q^{\frac{B}{2}} C^{1 j_1 J_1}_{B -n_1 -N_1} (-1)^{j_1-n_1} q^{-\frac{n_1}{2}} C^{j_6 j_1 j_5}_{m_6 -n_1 m_5} |j_6 n_6\rangle \otimes \langle J_1 N_1|\otimes \langle j_5 m_5| \\
&= N_{j_6} (-1)^{J_1+j_5+j_6} \sqrt{[2J_1+1][2j_6+1]} \begin{Bmatrix} J_1 &j_1 &1\\ j_6 &j_6 &j_5\end{Bmatrix}\ i_{j_6 J_1^* j_5^*}.
\end{aligned}
\ee
We re-assemble the pieces to get the quantized version of $\vec{T}_2\cdot R(u_1) \vec{T}^{\operatorname{op}}_6$,
\begin{multline} \label{QuantumT2U1T6}
\sum_{A,B} \left[(D_A\otimes t_{A}\otimes \one) \otimes (t^{\operatorname{op}}_{-B} \otimes D_B^* \otimes \one)\right]\ i_{j_1 j_2 j_3}\otimes i_{j_6 j_1^* j_5^*} \\
= N_{j_2} N_{j_6} \sqrt{[2j_2+1][2j_6+1]} \sum_{J_1} (-1)^{J_1+j_1+j_2+j_3+j_5+j_6} [2J_1+1] \begin{Bmatrix} J_1 &j_1 &1\\ j_2 &j_2 &j_3\end{Bmatrix} \begin{Bmatrix} J_1 &j_1 &1\\ j_6 &j_6 &j_5\end{Bmatrix}\ i_{J_1 j_2 j_3}\otimes i_{j_6 J_1^* j_5^*}
\end{multline}

\medskip

We have now achieved the quantization of all the pieces required to get the quantum Hamiltonian constraint in  \eqref{QuantumT2T6} and \eqref{QuantumT2U1T6}. It allows to write down the corresponding Wheeler-DeWitt equation explicitly. A state can be expanded like
\be
\psi = \sum_{\{j_e\}} \psi(j_1,\dotsc,j_6,\dotsc)\,\prod_e [2j_e+1]\ i_{j_1 j_2 j_3}\otimes i_{j_6 j_1^* j_5^*}\otimes i_{j_2^* j_6^* j_4} \bigotimes_{v'} i_{v'}
\ee
where the tensor product over $v'$ indicates all the vertices of the cell decomposition others than $v_{123}, v_{615}, v_{264}$ and the corresponding intertwiners ensuring the local Gauss law. To avoid lengthy formulas, we introduce the following notations,
\begin{align}
A_{\pm}(j_1) &= \sum_{J_1} \delta_{J_1,j_1\pm1} (-1)^{J_1+j_1+j_2+j_3+j_5+j_6} [2J_1+1] \begin{Bmatrix} J_1 &j_1 &1\\ j_2 &j_2 &j_3\end{Bmatrix} \begin{Bmatrix} J_1 &j_1 &1\\ j_6 &j_6 &j_5\end{Bmatrix}\\
A_0(j_1) &= (-1)^{2j_1+j_2+j_3+j_5+j_6} [2j_1+1] \begin{Bmatrix} j_1 &j_1 &1\\ j_2 &j_2 &j_3\end{Bmatrix} \begin{Bmatrix} j_1 &j_1 &1\\ j_6 &j_6 &j_5\end{Bmatrix} - (-1)^{j_2+j_6+j_4+1} \begin{Bmatrix} j_2 &j_2 &1\\ j_6 &j_6 &j_4\end{Bmatrix}
\end{align}
$A_\pm(j_1)$ is the coefficient of the term with $J_1=j_1\pm1$ in \eqref{QuantumT2U1T6}. $A_0(j_1)$ is the coefficient of the term with $J_1=j_1$ in \eqref{QuantumT2U1T6} minus the eigenvalue of \eqref{QuantumT2T6}. Acting with the quantum constraint on $\psi$ gives
\begin{multline}
\widehat{H}_{26} \psi = \sum_{\{j_e\}} \psi(j_1,\dotsc)\ N_{j_2}N_{j_6} \prod_{e} [2j_e+1]\\
\bigl(A_-(j_1) i_{j_1-1 j_2 j_3}\otimes i_{j_6 (j_1-1)^* j_5^*} + A_0(j_1) i_{j_1 j_2 j_3}\otimes i_{j_6 j_1^* j_5^*} + A_+(j_1) i_{j_1+1 j_2 j_3}\otimes i_{j_6 (j_1+1)^* j_5^*}\bigr)
 \otimes i_{j_2^* j_6^* j_4} \bigotimes_{v'} i_{v'}
\end{multline}
To relabel the sum over $j_1$, we notice that
\be
A_{\pm}(j_1\mp1) = \frac{[2j_1+1]}{[2(j_1\mp1)+1]}\ A_{\mp}(j_1),
\ee
and therefore
\begin{multline}
\widehat{H}_{26} \psi = \sum_{\{j_e\}} N_{j_2}N_{j_6} \bigl(A_-(j_1) \psi(j_1-1,j_2,\dotsc) + A_0(j_1) \psi(j_1,j_2\dotsc) + A_+(j_1) \psi(j_1+1,j_2\dotsc)\bigr)\\
\prod_{e} [2j_e+1]\ i_{j_1 j_2 j_3}\otimes i_{j_6 j_1^* j_5^*} \otimes i_{j_2^* j_6^* j_4} \bigotimes_{v'} i_{v'}.
\end{multline}
Thanks to the orthogonality of intertwiners with different spin labels, the constraint $\widehat{H}_{26} \psi = 0$ leads to a recursion on the coefficients of the expansion $\psi(j_1,j_2,\dotsc)$,
\be \label{Recursion}
A_-(j_1) \psi(j_1-1,j_2,\dotsc) + A_0(j_1) \psi(j_1,j_2\dotsc) + A_+(j_1) \psi(j_1+1,j_2\dotsc) = 0.
\ee
It is of order 2 since it involves three consecutive neighbours. However, a single initial condition suffices to implement it. Indeed, on the boundary of the domain satisfying the triangle inequalities, i.e. $j_1^{\operatorname{min}} = \max(|j_2-j_3|,|j_5-j_6|)$, the lowering coefficient $A_{-}(j_1^{\operatorname{min}})$ vanishes, making the recursion first order. Moreover, up to the choice of the initial condition (depending on $j_2,\dots$), the solution is known to be the $q$-6j symbol. Let us define the function $\phi(j_2,\dotsc)$ via
\be
\psi(j_1^{\operatorname{min}},j_2,\dotsc) = \begin{Bmatrix} j_1^{\operatorname{min}} &j_2 &j_3\\ j_4 &j_5 &j_6\end{Bmatrix}\ \phi(j_2,\dotsc),
\ee
then the recursion generates
\be
\psi(j_1,j_2,\dotsc) = \begin{Bmatrix} j_1 &j_2 &j_3\\ j_4 &j_5 &j_6\end{Bmatrix}\ \phi(j_2,\dotsc).
\ee
Moreover, there is a Hamiltonian constraint for each vertex of the face, meaning that the same recursion holds on $j_2$ and on $j_6$. It is found implementing them that $\phi$ depends on all the spins but $j_1,j_2,j_6$. The dependence of the state on $j_1,j_2,j_6$ is entirely captured in a 6j-symbol,
\be \label{6jFactorization}
\psi(j_1,j_2,j_3,j_4,j_5,j_6,\dotsc) = \begin{Bmatrix} j_1 &j_2 &j_3\\ j_4 &j_5 &j_6\end{Bmatrix}\ \phi(j_3,j_4,j_5,\dotsc).
\ee

\subsection{Towards spin foam transition amplitudes of the Turaev-Viro type}

Since the full dependence of the physical state on the spins along the boundary of the face has been extracted, the function $\phi$ should be determined by a set of constraints on the graph $G/f_{126}$ where the face and its boundary edges 1,2,6 have been removed (or rather continuously shrunk to a point so as to preserve the topology), as depicted in the Figure \ref{fig:RibbonMove}. This is  expected from the classical theory itself. Indeed, one can gauge fix the local $\SU(2)$ symmetry so as to set $\tilde u_2=u_6=\one$. Then the flatness constraint on the face simply reduces to $u_1=\one$. That in turn enforces $\tu_1=u_2=\tu_6=\one$ as well as $\ell_1=\tell_1, \ell_2=\tell_2, \ell_6=\tell_6$. One is left with the three Gauss laws on the vertices $v_{123}, v_{615}, v_{264}$ from which a condition on the external variables of the edges 3,4,5 only can be extracted,
\be
\left.\begin{aligned}
&\ell_1\ell_2\ell_3=\one\\
&\ell_2^{-1} \ell_6^{-1} \ell_4 =\one\\
&\ell_6 \ell_1^{-1} \tell_5^{-1} = \one
\end{aligned} \right\} \quad \Rightarrow\qquad \tell_5^{-1}\,\ell_4\,\ell_3 = \one.
\ee
That means that the face and its boundary edges is replaced with a single (ribbon) vertex with the Gauss law $\tell_5^{-1} \ell_4 \ell_3 =\one$.

\begin{figure}
\includegraphics[scale=.4]{3ValentFace.pdf} \hspace{3cm} \includegraphics[scale=.5]{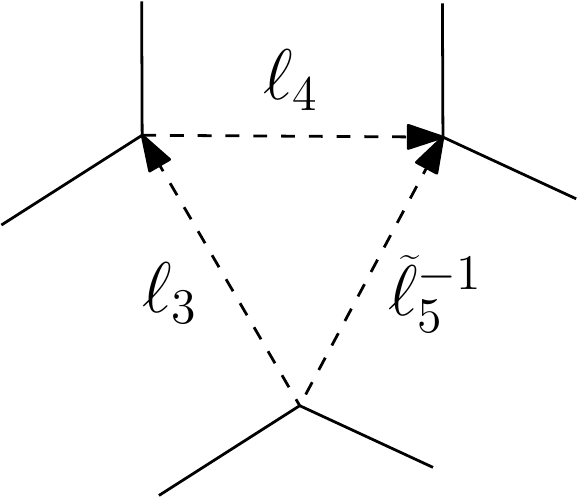}
\caption{\label{fig:RibbonMove} On the left: the face $f_{126}$ of the graph $G$ on which we solve the quantum flatness constraint. On the right: the same region in the graph $G/f_{126}$.}
\end{figure}

Even if we have solved the Gauss laws on $v_{123}, v_{615}, v_{264}$ and the flatness constraint on the face $f_{126}$ to get the factorization of the physical state \eqref{6jFactorization}, the properties of $\psi$ under the quantum version of $\tell_5^{-1} \ell_4 \ell_3$ is not clear to us yet.

The change of graph from $G$ to $G/f_{126}$ is a 3-to-1 move, as can be seen in the dual picture. The initial situation, with the face $f_{126}$, dually corresponds to three triangles glued two by two around a vertex of degree 3 (which is dual to the face of degree 3). The edges 1,2,6 are dual to the internal edges, and the edges 3,4,5 are dual to the exterior edges of the gluing. The move, contracting the face and its boundary edges, dually consists of removing the internal edges and keeping only the external edges which then form a single triangle. It is depicted in the Figure \ref{fig:Move1To3}.

\begin{figure}
\includegraphics[scale=.7]{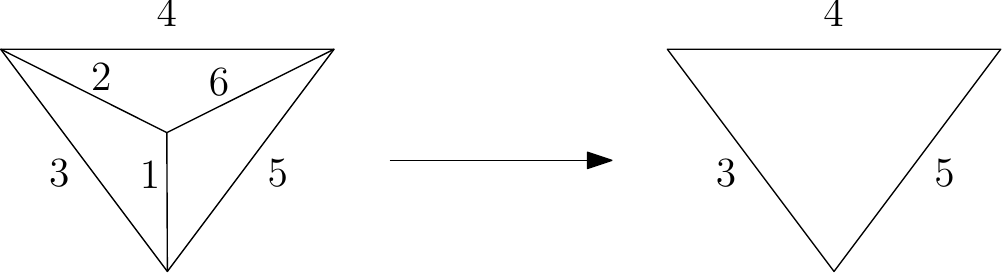}
\caption{\label{fig:Move1To3} On the left, the dual to the left of the Figure \ref{fig:RibbonMove} and the right the dual to the right of the Figure \ref{fig:RibbonMove}.}
\end{figure}

The transition from a state $\psi(j_1,\dotsc)$ on $G$ to a state $\phi(j_3,j_4,j_5,\dotsc)$ on $G/f_{126}$ has to preserve the spins of the common edges. Moreover the result \eqref{6jFactorization} is the first step to show that the amplitude associated to this transition in the spin representation is a $q$-6j symbol. This would be a direct $q$-deformation extension of the already well-known result in the flat case at $q=1$, \cite{3DHamiltonian, ScalarProd3D},
\be
\left(\begin{array}{c} \includegraphics[scale=.4]{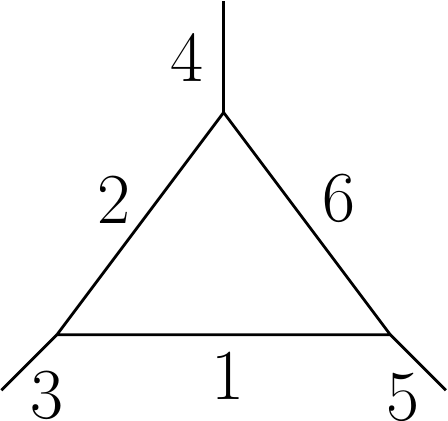} \end{array} , \begin{array}{c} \includegraphics[scale=.4]{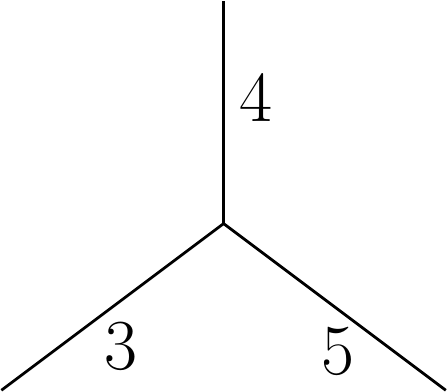} \end{array} \right) \propto \begin{Bmatrix} j_1 &j_2 &j_3\\ j_4 &j_5 &j_6\end{Bmatrix}\ \text{$\times$ rest}.
\ee
(The sign $\propto$ has to do with the normalization of the intertwiners which we have not cared about throughout the paper but which obviously play a role in the final physical scalar product between states. Generalizing the $q=1$ case, those normalizations are products and ratios of quantum dimensions.)

\section{Conclusion}

\subsection{Summary}

This paper is the beginning of the quantization of the classical system proposed in \cite{HyperbolicPhaseSpace} to describe homogeneously curved discrete geometries (negative cosmological constant). The classical phase space is based on $\SL(2,\C)$ viewed as a deformation of $T^*\SU(2)$.
\begin{itemize}
\item We have quantized the deformed momenta (which generalizes the fluxes of loop quantum gravity), which consist of the lower triangular matrix $\ell$ appearing in the Iwasawa decomposition $G=\ell u$ with $u\in\SU(2)$. Their quantum commutation relations generate the quantum algebra $\UQ$.
\item The Gauss law which generates local $\SU(2)$ transformations is imposed at the quantum level, where it is shown to enforce local $\UQ$ invariance, at each vertex, with the expected coproduct. Solutions in the spin representation are $q$-deformed intertwiners.
\end{itemize}
Those results are direct generalizations of the kinematics of loop quantum gravity.

Moreover the classical quantities $\vec{T} = \tr \ell\ell^\dagger \vec{\sigma}$ and $\vec{T}^{\operatorname{op}} = \tr \ell^\dagger\ell\vec{\sigma}$ which transform as 3-vectors are turned into vectors of operators, satisfying some specific equivariance properties \eqref{TransfoVectorOperator}, \eqref{EquivarianceTop}. They are such that the scalar product of those vectors of operators are $\UQ$ invariant and can thus be used to define observables. This result
\begin{itemize}
\item makes the connection with the framework introduced in \cite{ours1, ours2} based on tensor operators,
\item allows to investigate the quantization of the Hamiltonian constraint.
\end{itemize}
We have further solved this constraint explicitly on a face of degree three, in the spin representation, and find that it enforces a factorization of the physical state into a $q$-6j symbol times a state which satisfies the constraints on a lattice where the face has been shrunk. This is precisely a realization of the 2D Pachner move 3-to-1, a first step towards spin foam transition amplitudes for real $q$ and quantum coarse graining.

\subsection{Open questions for future investigations}

All the results presented here turn out to be $q$-deformed extensions of results which are well-known in 3D gravity with vanishing cosmological constant (on the lattice). However, we are not quite yet at the same level of understanding in the deformed case and many questions remain to be answered.

\smallskip

\paragraph*{\textbf{Pachner moves}:}
For instance, we have not completed the quantum analysis of the 3-to-1 move as it requires a better understanding of the behavior of $\UQ$ transformations under coarse graining. The Pachner move 2-to-2 also has to be analyzed ; the associated transition amplitude is expected to be a $q$-6j symbol again.

\smallskip

\paragraph*{\textbf{Quantization procedure}:} We have been able to quantize the key elements of the theory, the Gauss constraint and the Hamiltonian constraint, however much is left for a complete understanding of the quantization procedure. 
  The observables introduced in \cite{ours1, ours2} are made out of tensor operators, conjugated with $\cR$-matrices to ensure the correct braiding. In the present work, it is  remarkable that \textit{no} $\cR$-matrix is needed in the construction of observables (note that we restricted our attention to observables acting on the two first legs of the intertwiner). Instead we deal with two different kinds of vector operators: the standard one  $\bt$  providing the quantization of $\vec{T}$, and $\bt^{\operatorname{op}}$, the quantization of $\vec{T}^{\operatorname{op}}$. This is new since when the deformation parameter $\kappa$ goes to zero, both vectors and both operators coincide. However, for $\kappa\neq 0$, $\vec{T}^{\operatorname{op}}$ and $\vec{T}$ are related through a $\SU(2)$ transformation $h$ (coming from the Cartan decomposition $\ell =Bh$ where $B$ is a boost). Following the result in \eqref{cool}, it seems that this $h$ transformation could be related to the $\cR$-matrix. This point illustrates that we do not have  yet a complete understanding of the quantization procedure.

\smallskip

\paragraph*{\textbf{Deformed spinors}:} A modern tool for quantum models of geometry on the lattice is the spinor formalism \cite{twisted, un0, un1, twisted1}. Spinor variables (living on the fundamental representation of $\SU(2)$) indeed enable to generate all observables in any representations, to reconstruct the holonomies, to evaluate generating functions of spin network evaluations \cite{Jeff, CostantinoMarche, GeneratingFunctions} and it has been key to discovering the $\U(N)$ symmetry of the observable algebra at a vertex of the graph \cite{un0,un1}. It is a framework that encompasses all the results obtained so far.  It would be very useful to develop such a formalism in the deformed case as well.  The first step, that is the definition of the deformed classical spinors,  has been described  in \cite{spin}.
Their  quantum part has  been developed in \cite{ours2} using spinor operators. Spinors would in principle simplify the quantization of all the dynamical variables (such as the boost $B$ and the rotation $h$) and allow to represent them as operators carrying arbitrary representations. In the present paper for instance, the Hamiltonian constraint (which is invariant) is built from operators carrying the spin 1 (the vector operators and the holonomy in the vector representation). It therefore generates a recursion on the physical state with shifts of the spin labels by $\pm1$. It means that families of integer spins and half-integer spins do not talk to each other at this point (and both require an initial condition). To really solve the model at once, it is required to define the Hamiltonian using the holonomy in the representation of spin 1/2, as it has been done using spinors in the case $q=1$ \cite{Spin1/2Hamiltonian}.

\smallskip

\paragraph*{\textbf{4D case}:}
Eventually, since the form of the Hamiltonian constraint we have used here is the same as the one in the flat case in 3D \emph{and in 4D} \cite{Recursion4D}, there might be a way to extend our results, considering that the Hamiltonian generate a recursion on the physical state in the spin basis, to the homogeneously curved, and still topological, sector (of the $BF +\Lambda B^2$ type) in 4D.

\section*{Acknowledgement}
M. Dupuis and F. Girelli acknowledge financial support from the Government of Canada through
respectively a Banting fellowship and a NSERC Discovery grant.

\appendix
\section{Definitions and Notations on $\UQ$}\label{uq}

The $q$-deformation of the universal enveloping algebra of $\SU(2)$, $\UQ$ is generated by $J_\pm, J_z$ which satisfy the commutation relations
\be \label{commutation}
[J_z, \, J_\pm ]=\pm J_\pm, \qquad [J_+, \, J_-]=\f{q^{J_z}-q^{-J_z}}{q^{\f12}-q^{-\f12}}.
\ee
Setting $K\equiv q^{\f{J_z}{2}}$, they are equivalent to
\be \label{commutation1}
K J_\pm K^{-1}=q^{\pm \f12} J_\pm, \qquad [J_+, \,J_-]=\f{K^2-K^{-2}}{q^{\f12}-q^{-\f12}}.
\ee

The co-algebra structure of $\UQ$ is defined by
\begin{itemize}
\item a coproduct,
\be \label{coproduct}
\begin{aligned}
\Delta:\ &\UQ \to \UQ \otimes \UQ \\
&J_\pm \mapsto \Delta(J_\pm) = J_\pm \otimes K + K^{-1} \otimes J_\pm\\
&K\mapsto \Delta(K)=K\otimes K,
\end{aligned}
\ee
\item an antipode: $S(J_\pm)=-q^{\pm \f12} J_\pm, \; S(K)=K^{-1}$;
\item a counit: $\epsilon(J_\pm)=0, \; \epsilon(K)=1$.
\end{itemize}

The representation theory of $\UQ$ is very similar to the one of $\su(2)$, except that $q$-numbers are now entering into the game. We introduce the traditional $q$-number notation,
\be
[n]\equiv \f{q^{\f{n}{2}}-q^{-\f{n}{2}}}{q^{\f12}-q^{-\f12}}.
\ee
The irreducible representations of $\UQ$ are labeled by a non-negative half-integer and and act on Hilbert spaces of dimension $2j+1$, $j\in\N/2$. The states diagonalizing $K$ form an orthonormal basis $(|j,-j\ra,|j,-j+1\ra,\dotsc,|j,j-1\ra, |j,j\ra)$ and
\be \label{irreps}
K|j\,m\ra = q^{\frac{m}{2}} |j \, m\ra,\qquad \text{and}\qquad J_\pm |j\, m\ra= \sqrt{[j\mp m][j\pm m +1]} |j\, m \pm 1 \ra,\qquad \text{for $m=-j,\dotsc,j$}.
\ee

The CG coefficients map the module $j_1\otimes j_2$ to $j_3$, meaning that the action of the generators has to commute with map. This gives a recursion on the coefficients,
\begin{multline} \label{CGRecursion}
q^{\frac{m_2}{2}}\sqrt{[j_1\pm m_1][j_1\mp m_1+1]} C^{j_1 j_2 j_3}_{m_1\mp1 m_2 m_3} + q^{-\frac{m_1}{2}}\sqrt{[j_2\pm m_2][j_2\mp m_2+1]} C^{j_1 j_2 j_3}_{m_1 m_2\mp1 m_3} \\
= \sqrt{[j_3\mp m_3][j_3\pm m_3+1]} C^{j_1 j_2 j_3}_{m_1 m_2 m_3\pm1}
\end{multline}
The CG coefficients enjoy some symmetries that are used quite often throughout the main text,
\begin{align}
C^{j_1 j_2 j_3}_{m_1 m_2 m_3} &= C^{j_2 j_1 j_3}_{-m_2 -m_1 -m_3}\\
&= (-1)^{j_1-m_1} q^{\frac{m_1}{2}} \sqrt{\frac{[2j_3+1]}{[2j_2+1]}} (-1)^{j_1+j_3-j_2} C^{j_1 j_3 j_2}_{-m_1 m_3 m_2}\\
&= (-1)^{j_2+m_2} q^{-\frac{m_2}{2}} \sqrt{\frac{[2j_3+1]}{[2j_1+1]}} (-1)^{j_2+j_3-j_1} C^{j_3 j_2 j_1}_{m_3 -m_2 m_1}.
\end{align}
The $q$-6j symbol satisfies
\begin{multline}
\sum_{m_2, m_3, m_4} C^{j_1 j_2 j_3}_{m_1 m_2 m_3} (-1)^{j_2-m_2} q^{\frac{m_2}{2}} C^{j_2 j_6 j_4}_{-m_2 m_6 m_4} C^{j_3 j_4 j_5}_{m_3 m_4 m_5} \\= \begin{Bmatrix} j_1 &j_2 &j_3 \\ j_4 &j_5 &j_6\end{Bmatrix} (-1)^{j_1+j_2+j_4+j_5} (-1)^{j_2+j_6-j_4} \sqrt{[2j_3+1][2j_4+1]} C^{j_1 j_6 j_5}_{m_1 m_6 m_5}.
\end{multline}

\end{document}